\documentclass[12pt,a4paper,american]{article}

\usepackage{amsmath}
\usepackage{graphicx}
\usepackage{amssymb}


\usepackage{amsthm}
\usepackage{amsfonts}
\usepackage{mathrsfs}

\newcommand{\lyxaddress}[1]{
\par {\raggedright #1
\vspace{1.4em}
\noindent\par}
}


\numberwithin{equation}{section}

\addtolength{\textwidth}{4em}
\addtolength{\hoffset}{-2em}
\addtolength{\textheight}{11ex}
\addtolength{\voffset}{-5.5ex}

\newtheorem{theorem}{Theorem}[section]
\newtheorem{proposition}[theorem]{Proposition}
\newtheorem{lemma}[theorem]{Lemma}

\newtheorem{conjecture}[theorem]{Conjecture}
\newtheorem{remark}[theorem]{Remark}
\theoremstyle{remark}


\newcommand{\sF}{\mathscr{F}}


\renewcommand{\Im}{\mathop\mathrm{Im}\nolimits}
\renewcommand{\Re}{\mathop\mathrm{Re}\nolimits}
\newcommand{\supp}{\mathop\mathrm{supp}\nolimits}
\newcommand{\dist}{\mathop\mathrm{dist}\nolimits}
\newcommand{\dd}{\mathrm{d}}
\newcommand{\const}{\mathrm{const}}

\begin{document}

\title{Resonant cyclotron acceleration of particles\\
  by a time periodic singular flux tube}

\date{{}}

\author{J.~Asch$^{1}$, T.~Kalvoda$^{2}$, P.~\v{S}\v{t}ov\'\i\v{c}ek$^{3}$}

\maketitle

\lyxaddress{$^{1}$Centre de Physique Th\'eorique (CPT-CNRS UMR
    6207) Universit\'e du Sud, Toulon-Var, BP~20132, F--83957 La Garde
    Cedex, France ({\tt asch@cpt.univ-mrs.fr}).\newline
    Supported by Grant PHC-Barrande No. 21899QB, Minist\`ere des
    affaires \'etrang\`eres et europ\'eennes.}
\lyxaddress{$^{2}$Department of Theoretical Computer Science,
    Faculty of Information Technology, Czech Technical University
    in~Prague, Kolejni~2, 120~00 Praha, Czech Republic
    ({\tt tomas.kalvoda@fit.cvut.cz}).\newline
    Supported by Grant No. LC06002 of the Ministry of Education of
    the Czech Republic and by Grant No. 202/08/H072 of the Czech
    Science Foundation.}
\lyxaddress{$^{3}$Department of Mathematics,
    Faculty of Nuclear Science, Czech Technical University in~Prague,
    Trojanova 13, 120~00 Praha, Czech Republic
    ({\tt stovicek@kmlinux.fjfi.cvut.cz}).\newline
    Supported by Grant No.\ 201/09/0811 of the Czech Science
    Foundation.}

\begin{abstract}
  \noindent We study the dynamics of a classical nonrelativistic
  charged particle moving on a punctured plane under the influence of
  a homogeneous magnetic field and driven by a periodically
  time-dependent singular flux tube through the hole. We observe an
  effect of resonance of the flux and cyclotron frequencies. The
  particle is accelerated to arbitrarily high energies even by a flux
  of small field strength which is not necessarily encircled by the
  cyclotron orbit; the cyclotron orbits blow up and the particle
  oscillates between the hole and infinity.  We support this
  observation by an analytic study of an approximation for small
  amplitudes of the flux which is obtained with the aid of averaging
  methods. This way we derive asymptotic formulas that are afterwards
  shown to represent a good description of the accelerated motion even
  for fluxes which are not necessarily small. More precisely, we argue
  that the leading asymptotic terms may be regarded as approximate
  solutions of the original system in the asymptotic domain as the
  time tends to infinity.\\

\noindent \emph{Keywords}: electron-cyclotron resonance, singular
flux tube, averaging method, leading asymptotic term

\noindent \emph{AMS subject classification}: 70K28, 70K65, 34E10,
34C11, 34D05
\end{abstract}

\section{Introduction}

Consider a classical point particle of mass $m$ and charge $e$ moving
on the punctured plane $\mathbb{R}^{2}\setminus\{0\}$ in the presence
of a homogeneous magnetic field of magnitude $b$. Suppose further that
a singular magnetic flux line whose strength $\Phi(t)$ is oscillating
with frequency $\Omega$ intersects the plane at the origin. The
equations of motion in phase space
$\mathbb{P}=(\mathbb{R}^{2}\setminus\{0\})\times\mathbb{R}^{2}$ are
generated by the time--dependent Hamiltonian
\begin{equation}
  \label{eq:H_cartesian_def}
  H(q,p,t)=\frac{1}{2m}\left(p-eA(q,t)\right)^{2},\ \mbox{with\ }
  A(q,t)=\left(-\frac{b}{2}
    +\frac{\Phi(t)}{2\pi\vert q\vert^{2}}\right)\! q^{\perp},
\end{equation}
where $(q,p)\in\mathbb{P}$, $t\in\mathbb{R}$. Here and throughout we
denote $x^{\perp}=(-x_{2},x_{1})$ for
$x=(x_{1},x_{2})\in\mathbb{R}^{2}$. Our aim is to understand the
dynamics of this system for large times. Of particular interest is the
growth of energy as well as the drift of the guiding center.

Our main result in the present paper is to exhibit and to prove a
resonance effect whose origin can qualitatively be understood as
follows.  The Lorentz force equals
\[
eb\, q'(t)^{\perp}+eE(t)\quad\mbox{where}\quad E(t)
=-\frac{\Phi'(t)}{2\pi|q|^{2}}\, q^{\perp}.
\]
In the induction free case when $\Phi'(t)=0$, the particle moves along
a circle of fixed center (cyclotron orbit) with the cyclotron
frequency $\omega_{c}=\left\vert eb\right\vert /m$ and the cyclotron
radius depending bijectively on the energy. If $\Phi'(t)$ is nonzero
but small then the energy, the frequency and the guiding center of the
orbit become time dependent. For the time derivative of the energy
computed in polar coordinates $(r,\theta)$ one finds that
\begin{equation}
  \frac{\mathrm{d}}{\mathrm{d}t}H
  = \frac{\partial}{\partial t}H
  = e\, q'(t)\cdot E(t)=-\frac{e}{2\pi}\,\theta'(t)\Phi'(t)
  \label{eq:D_der}
\end{equation}
and so the acceleration rate is given by
\begin{equation}
  \gamma = \lim_{\tau\to\infty}\frac{H(\tau)}{\tau}
  = -\frac{e}{2\pi}\,\lim_{\tau\to\infty}\,
  \frac{1}{\tau}\int_{0}^{\tau}\theta'(t)\Phi'(t)dt.
  \label{eq:rate}
\end{equation}
One may expect that $\theta'(t)$ is close to a periodic function of
frequency $\omega_{c}$ (which is the case when $\Phi'(t)=0$).  If the
frequencies $\omega_{c}$ and $\Omega$ are resonant, i.e. if there
exist indices $n$ and $m$ in the support of the Fourier transforms of
$\theta'(t)$ and $\Phi'(t)$, respectively, such that
$n\omega_{c}+m\Omega=0$ then one may speculate that $\gamma$ is
positive.

Our original motivation to study this problem was to understand the
dynamics of so called quantum charge pumps \cite{Laughlin,Niu} already
on a classical level and to gain a detailed intuition on the dynamical
mechanisms in the simplest case.  We suggest, however, that particle
acceleration mechanisms might be of actual interest in various other
models, for example in interstellar physics \cite{Bolton,Horne}.  We
speculate that the equations of motion in this domain might exhibit
similar ingredients as the ones studied here.

We remark, too, that our results could be of interest in accelerator
physics. While the betatron principle uses a linearly time dependent
flux tube to accelerate particles on cyclotron orbits around the flux
\cite{KerstSerber}, the resonance effect we observe in the present
paper has the feature that acceleration can be achieved with
arbitrarily small field strength. A second aspect is that, in contrast
to the case of a linearly increasing flux, cyclotron orbits which do
not encircle the flux tube are accelerated as well. In fact, for the
linear case it was shown in \cite{AschStovicek} that outside the flux
tube one has the usual drift of the guiding center, without
acceleration, along the field lines of the averaged potential.

In the case of resonant frequencies one readily observes an
accelerated motion numerically, see Figure~\ref{spiral} for an
illustration. A typical resonant trajectory is a helix-like curve
whose center goes out at the same rate as the radius grows. At the
moment, a complete treatment of the equations of motion is out of
reach. Therefore we first apply a resonant averaging method to derive
a Hamiltonian which is formally a first order approximation in a small
coupling constant of the flux tube. We then analyze its flow and show
a certain type of asymptotic behavior at large times with the aid of
differential topological methods. Though this approach gives only an
existence result we use the analysis to derive the leading asymptotic
terms.

Then, in the second stage of our analysis, these asymptotic terms are
used to formulate a simplified version of the evolution equations. We
decouple the equations by substituting the anticipated leading
asymptotic terms into the right-hand sides.  The decoupled system
admits a rigorous asymptotic analysis whose conclusions turn out to be
fully consistent with the formulas derived by the averaging methods.
This holds under the assumption that the singular flux tube has a
correct sign of the time derivative when the particle passes next to
the origin. It should be emphasized that the singular magnetic flux is
not assumed to be very small in the analysis of the decoupled system.

The paper is organized as follows. In Section~\ref{sec:results} we
summarize our main results. Namely, in
Subsection~\ref{results_Hamiltonian} we introduce the Hamiltonian
equations of motion in action-angle coordinates, in
Subsection~\ref{sec:results_averaged} we study the averaged dynamics
resulting from the Poincar\'e-von~Zeipel elimination method and in
Subsection~\ref{sec:results_asymptotic} we rederive the leading
asymptotic terms by introducing a decoupled system. Finally, in
Subsection~\ref{sec:results_guiding} we interpret the formulas derived
so far in guiding center coordinates. Proofs, derivations and
additional details are postponed to Sections~\ref{sec:TheModel},
\ref{sec:vonZeipelMethod}, \ref{sec:SimplifiedEquation} and
\ref{sec:Conclusion}, respectively.

\begin{figure}

\centering\includegraphics[scale=0.55]{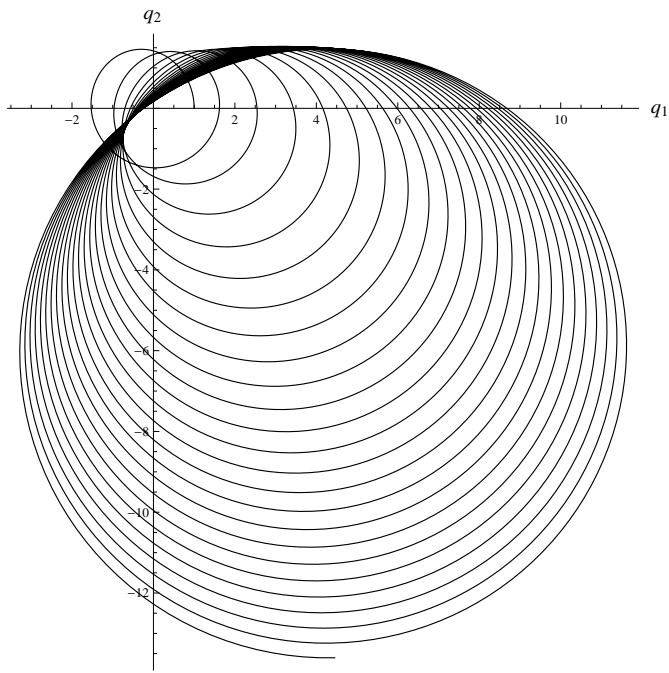}
\label{spiral}
\caption{The solution $q(t)$ of the equations of motion in the plane
  for $t\in[\,0,150\,]$, with $\Phi(t)=2\pi\epsilon f(\Omega t)$,
  $f(t)=\sin(t)-(1/3)\cos(2t)$, for the values of parameters
  $\epsilon=0.35$, $b=1$, $\Omega=1$, and with the initial conditions
  $q(0)=(1,0)$, $q'(0)=(0,1.617)$.}
\end{figure}

\section{Main results}
\label{sec:results}

\subsection{The Hamiltonian}
\label{results_Hamiltonian}

In view of the rotational symmetry of the problem we prefer to work
with polar coordinates, $q=r(\cos\theta,\sin\theta)$. Denoting by
$p_r$ and $p_\theta$ the momenta conjugate to $r$ and $\theta$,
respectively, one has
\begin{displaymath}
  p_r = p\cdot q/|q|,~
  p_\theta = p\cdot q^\perp \equiv q\wedge p.
\end{displaymath}
Using that the vector potential $A(q,t)$ is proportional to $q^\perp$
one finds for the Hamiltonian (\ref{eq:H_cartesian_def}) in polar
coordinates
\begin{equation}
  H(r,\theta,p_{r},p_{\theta},t)
  =\frac{1}{2m}\left(p_{r}^{\,2}
    +\left(\frac{1}{r}\left(p_{\theta}-\frac{e\Phi(t)}{2\pi}\right)
      +\frac{eb}{2}\, r\right)^{\!2}\right).
  \label{eq:Ham_polar_coord}
\end{equation}
The angular momentum $p_{\theta}$ is an integral of motion and thus
the analysis of the system effectively reduces to a one-dimensional
radial motion with time-dependent coefficients. From now on we set
$e=m=1$, and so the cyclotron frequency equals $b$. Put
\begin{equation}
  a(t) = p_{\theta}-\frac{1}{2\pi}\,\Phi(t).
  \label{eq:DefAt}
\end{equation}
In the radial Hamiltonian one may omit the term $ba(t)/2$ not
contributing to the equations of motion and thus one arrives at the
expression
\begin{equation}
  H_{\mathrm{rad}}(r,p_{r},t)
  = \frac{p_{r}^{\,2}}{2}+\frac{a(t)^{2}}{2r^{2}}+\frac{b^{2}r^{2}}{8}.
  \label{eq:DefHradial}
\end{equation}
For definiteness we assume that $b>0$.

The time-independent Hamiltonian system, with $a=\const$, is
explicitly solvable. In particular, one finds the corresponding
action-angle coordinates $(I,\varphi)$ depending on $a$ as a
parameter. Substituting the given function $a(t)$ for $a$ one gets a
time-dependent transformation of coordinates. This is an essential
step in the analysis of the time-dependent Hamiltonian
(\ref{eq:DefHradial}) since action-angle coordinates are appropriate
for employing averaging methods. Postponing the derivation to
Section~\ref{sec:TheModel}, here we give the transformation equations:
\begin{eqnarray}
  r & = & \frac{2}{\sqrt{b}}\left(I+\frac{|a|}{2}+\sqrt{I(I+|a|)}\,
    \sin(\varphi)\right)^{\!1/2},\label{eq:canonic_r}\\
  \noalign{\medskip}p_{r}
  & = & \frac{\sqrt{bI(I+|a|)}\,
    \cos(\varphi)}{\left(I+\frac{|a|}{2}+\sqrt{I(I+|a|)}\,
      \sin(\varphi)\right)^{\!1/2}}\,,
  \label{eq:canonic_pr}
\end{eqnarray}
and conversely,
\begin{eqnarray}
  I &=& \frac{1}{2b}\!\left(p_{r}^{\,2}+\left(\frac{|a(t)|}{r}
      -\frac{br}{2}\right)^{\!2}\right)\,,
  \label{eq:I_fce_polar}\\
  \noalign{\smallskip}
  \tan(\varphi) &=& -\frac{1}{brp_r}
  \left(p_r^{\,2}+\frac{a(t)^2}{r^2}-\frac{b^2r^2}{4}\right)\!.
  \label{eq:varphi_fce_polar}
\end{eqnarray}
Furthermore, expressing the Hamiltonian in action-angle coordinates
one obtains
\begin{equation}
  \label{eq:H_c_def}
  H_{c}(\varphi,I,t)
  = bI-|a(t)|'\arctan\!\left(\frac{\sqrt{I}
      \cos(\varphi)}{\sqrt{I+|a(t)|}+\sqrt{I}\sin(\varphi)}\right)\,,
\end{equation}
and the corresponding Hamiltonian equations of motion take the form
\begin{eqnarray}
  \varphi' & = & b-\frac{\cos(\varphi)aa'}{2\sqrt{I(I+|a|)}
    \left(2I+|a|+2\sqrt{I(I+|a|)}\,\sin(\varphi)\right)}\,,
  \label{eqmotion1}\\
  I' & = & -\frac{|a|'}{2}\left(1-\frac{|a|}{2I+|a|+2\sqrt{I(I+|a|)}\,
      \sin(\varphi)}\right).
  \label{eqmotion2}
\end{eqnarray}

Using action-angle coordinates one can give a rough qualitative
description of trajectories in the resonant case. From
(\ref{eq:canonic_r}) we see that
\[
r^{2} = \frac{1}{2}\,(r_{+}^{\;2}+r_{-}^{\;2})
+\frac{1}{2}\,(r_{+}^{\;2}-r_{-}^{\;2})\sin(\varphi)
\]
where
\begin{displaymath}
  r_{\pm}= \sqrt{\frac{2}{b}}\left(\sqrt{I+|a|}\pm\sqrt{I}\right)
\end{displaymath}
are extremal points of the trajectory (see
Section~\ref{sec:TheModel}). As formulated more precisely in the
sequel, resonant trajectories are characterized by a linear increase
of $I(t)$ for large times while $\varphi(t)\approx{}bt+\const$. Thus,
as the angle $\varphi$ increases the radius $r$ oscillates between
$r_{-}$ and $r_{+}$ (though $r_{-}$, $r_{+}$ themselves are also
time-dependent). Moreover, since $a(t)$ is bounded, $I\to\infty$
implies $r_{+}\to\infty$ and $r_{-}(t)=2|a(t)|/(b\,r_{+}(t))\to0$ as
$t\to+\infty$. Therefore the trajectory in the $q$-plane periodically
returns to the origin and then again escapes far away from it while
its extremal distances to the origin converge respectively to zero and
infinity. We refer again to Figure~\ref{spiral} for a typical
trajectory in the $q$-plane in the case of resonant frequencies.

Concluding this subsection let us make more precise some assumptions.
For the sake of definiteness we shall focus on the case when
$p_{\theta}$ is positive and greater than the amplitude of
$\Phi(t)/(2\pi)$ and so $a(t)$ is an everywhere strictly positive
function. Let us stress, however, that this restriction is not
essential for the resonance effect as the radial Hamiltonian
(\ref{eq:DefHradial}) depends only on $a(t)^{2}$ and thus the sign of
$a(t)$ is irrelevant for the motion in the radial direction. On the
other hand, as discussed in Subsection~\ref{sec:results_guiding}, the
sign of $a(t)$ determines whether the orbit encircles the singular
magnetic flux or not.

Furthermore, the frequency $\Omega>0$ of the singular flux tube is
treated as a parameter of the model and so we write
\begin{equation}
  \Phi(t)=2\pi\epsilon f(\Omega t)
  \label{eq:Phi_rel_f}
\end{equation}
where $f$ is a $2\pi$-periodic real function possibly obeying
additional assumptions. Hence
\begin{displaymath}
  a(t)=p_{\theta}-\epsilon f(\Omega t)
\end{displaymath}
where the coupling constant $\epsilon$ is supposed to be positive and,
if desired, playing the role of a small parameter.

\subsection{The dynamics generated by the first order averaged
  Hamiltonian}
\label{sec:results_averaged}

In order to study occurrences of resonant behavior we apply the
Poincar\'e-von~Zeipel elimination method which takes into account
possible resonances, as explained in detail, for instance, in
\cite{ArnoldKozlovNeishtadt}. In
Proposition~\ref{thm:ConclusionResonant} we provide a detailed
information on the resonance effect for the dynamics generated by the
first-order averaged Hamiltonian.

We start from introducing some basic notation. Let
$\mathbb{T}^{d}=(\mathbb{R}/2\pi\mathbb{Z})^{d}$ be the
$d$-dimensional torus. For $f(\varphi)\in C(\mathbb{T}^{d})$ and
$k\in\mathbb{Z}^{d}$ we denote the $k$th Fourier coefficient of $f$ by
the symbol
\[
\sF[f(\varphi)]_{k}=\frac{1}{(2\pi)^{d}}\,
\int_{\mathbb{T}^{d}}f(\varphi)\,
e^{-ik\cdot\varphi}\,\mathrm{d}\varphi.
\]
We introduce $\supp\sF[f]$ as the set of indices corresponding to
nonzero Fourier coefficients of $f(\varphi)$. For
$f\in{}C(\mathbb{T}^{d})$ and $\mathbb{L}\subset\mathbb{Z}^{d}$ put
\begin{displaymath}
  \langle f(\varphi)\rangle_{\mathbb{L}}
  = \sum_{k\in\mathbb{L}}\sF[f]_{k}\, e^{ik\cdot\varphi}.
\end{displaymath}

For example, in the formulation of
Proposition~\ref{thm:ConclusionResonant} we use the averaged function
$\langle{}f(\varphi)\rangle_{\mathbb{Z}\nu}$, with $\nu\in\mathbb{N}$.
Assuming that $f\in{}C(\mathbb{T}^{1})$ let us note that it can
alternatively be expressed, without directly referring to the Fourier
series, as
\begin{equation}
  \label{eq:f_aver_Znu}
  \langle f(\varphi)\rangle_{\mathbb{Z}\nu}
  = \frac{1}{\nu}\sum_{j=0}^{\nu-1}
  f\!\left(\varphi+\frac{2\pi}{\nu}\, j\right).
\end{equation}
To verify the formula observe that the RHS of (\ref{eq:f_aver_Znu})
again belongs to $C(\mathbb{T}^{1})$. To check that its Fourier series
coincides with the LHS of (\ref{eq:f_aver_Znu}) it suffices to
consider exponential functions $f(\varphi)=e^{ik\varphi}$, with
$k\in\mathbb{Z}$. In that case one finds that the RHS of
(\ref{eq:f_aver_Znu}) reproduces the function $f(\varphi)$ if $\nu$ is
a divisor of $k$ and vanishes otherwise.

In this subsection we assume that $\Phi(t)$ is given by
(\ref{eq:Phi_rel_f}) where $\epsilon>0$ is regarded as a small
parameter and the $2\pi$-periodic real function $f(\varphi)$ fulfills
\begin{equation}
  \sum_{k=1}^{\infty}k\,|\sF[f(\varphi)]_{k}| < \infty.
  \label{eq:SumkFfk}
\end{equation}
This implies that $f\in C^{1}(\mathbb{T}^{1})$.

Following a standard approach to time-dependent Hamiltonian systems we
pass to the extended phase space by introducing a new phase
$\varphi_{2}=\Omega t$ and its conjugate momentum $I_{2}$ thus
obtaining an equivalent autonomous system on a larger space.  To have
a unified notation we rename the old variables $\varphi$, $I$ as
$\varphi_{1}$, $I_{1}$, respectively, and set $\omega_{1}=b$,
$\omega_{2}=\Omega$.  With this new notation, we write
$I=(I_{1},I_{2})$, $\varphi=(\varphi_{1},\varphi_{2})$ (changing this
way the meaning of the symbols $I$ and $\varphi$ in the current
subsection).

The Hamiltonian on the extended phase space is defined as
\begin{displaymath}
 K(\varphi_{1},\varphi_{2},I_{1},I_{2})
 = \omega_2 I_{2}+H_{c}(\varphi_{1},I_{1},\varphi_{2}/\omega_2)
\end{displaymath}
Recalling (\ref{eq:H_c_def}) and the conventions for $p_{\theta}$ and
$\Phi$ one gets
\begin{equation}
  K(\varphi,I,\epsilon) = \omega_{1}I_{1}+\omega_{2}I_{2}
  +\epsilon F(\varphi,I,\epsilon)
  \label{eq:K_extended}
\end{equation}
where
\begin{equation}
  F(\varphi,I,\epsilon) = \omega_{2}f'(\varphi_{2})
  \arctan\!\left(\frac{\sqrt{I_{1}}
      \cos(\varphi_{1})}{\sqrt{I_{1}+p_{\theta}-\epsilon f(\varphi_{2})}
      +\sqrt{I_{1}}\sin(\varphi_{1})}\right)\!.
  \label{eq:F_vphi_I_eps}
\end{equation}

Our discussion focuses on the resonant case when
\begin{equation}
  \lambda:=\frac{\omega_{2}}{\omega_{1}}
  =\frac{\mu}{\nu}\,,\ \mbox{with\ }
  \mu,\nu\in\mathbb{N}\mbox{\ coprime},
  \label{eq:w2_to_w1_rational}
\end{equation}
and, moreover, $\nu$ fulfills
\begin{equation}
  \supp\sF[f]\cap\left(\mathbb{Z}\nu\setminus\{0\}\right)\neq\emptyset.
  \label{eq:ResonantCase}
\end{equation}
Note that (\ref{eq:ResonantCase}) happens if and only if $\langle
f(\varphi)\rangle_{\mathbb{Z}\nu}$ is not a constant function.
Discussion of the nonresonant case is, at least on the level of the
averaged dynamics, simple and we avoid it. Let us just remark that, in
that case, it is not difficult to see that trajectories as well as the
energy for the averaged Hamiltonian are bounded.

The Hamiltonian (\ref{eq:K_extended}) is appropriate for application
of the Poincar\'e-von~Zeipel elimination method whose basic scheme is
briefly recalled in Subsection~\ref{sec:vonZeipel_Notation}. The idea
is to eliminate from $K(\varphi,I,\epsilon)$, with the aid of a
canonical transformation, a subgroup of angle variables which are
classified as nonresonant, and thus to arrive at a new reduced
Hamiltonian $\mathcal{K}(\psi,J,\epsilon)$ depending only on the so
called resonant angle variables and the corresponding actions. Note,
however, that a good deal of information about the system is contained
in the canonical transformation itself.

To achieve this goal one works with formal power series in the
parameter $\epsilon$, and the construction is in fact an infinite
recurrence. In particular,
$\mathcal{K}(\psi,J,\epsilon)=\mathcal{K}_0(\psi,J)
+\epsilon\mathcal{K}_1(\psi,J)+\ldots$\,,
where the leading term remains untouched,
\begin{equation}
  \label{eq:calK_0}
  \mathcal{K}_0(\psi,J) = K_0(\psi,J) = \omega_1J_1+\omega_2J_2.
\end{equation}
In practice one has to interrupt the recurrence at some order which
means replacing the true Hamiltonian system by an approximate averaged
system. In our case we shall be content with a truncation at the first
order.

In our model, one can apply the substitution
$\chi_1=\mu\psi_1-\nu\psi_2$,
$\chi_2=\tilde{\nu}\psi_1+\tilde{\mu}\psi_2$ where
$\tilde{\mu},\tilde{\nu}\in\mathbb{Z}$ are such that
$\tilde{\mu}\mu+\tilde{\nu}\nu=1$ ($\tilde{\mu},\tilde{\nu}$ exist
since $\mu$, $\nu$ are coprime). The phase $\chi_2$ is classified as
nonresonant and can be eliminated while the phase $\chi_1$ is resonant
and survives the canonical transformation. The procedure is explained
in more detail in Subsections~\ref{ses:von_Zeipel_dynamics_1storder}
and \ref{sec:1st_order_dynamics}. Recalling (\ref{eq:calK_0}) combined
with (\ref{eq:w2_to_w1_rational}), here we just give the resulting
formula for the averaged Hamiltonian truncated at the first order,
\begin{equation}
  \label{eq:K_trunc_1}
  \mathcal{K}_{(1)}(\psi,J,\epsilon)
  = \frac{\omega_{1}}{\nu}\left(\nu J_{1}+\mu J_{2}\right)
  +\epsilon\,\mathcal{K}_{1}(\psi,J)
\end{equation}
where
\begin{equation}
  \mathcal{K}_{1}(\psi,J)
  = -\frac{\omega_{1}}{2}\sum_{n\in\mathbb{Z}\setminus\{0\}}
  \sF[f]_{-n\nu}\, i^{n\mu}\beta(J_{1})^{|n|\mu}\,
  e^{in(\mu\psi_{1}-\nu\psi_{2})}
  \label{eq:K1cal_psi_J}
\end{equation}
and
\begin{equation}
  \beta(J_{1})=\sqrt{\frac{J_{1}}{J_{1}+p_{\theta}}}\,.
  \label{eq:def_beta}
\end{equation}

One observes that the averaged four-dimensional Hamiltonian system
admits a reduction to a two-dimensional subsystem. Indeed, as it
should be, $\mathcal{K}_{1}(\psi,J)$ depends on the angles only
through the combination $\mu\psi_{1}-\nu\psi_{2}=\chi_{1}$. It follows
that the action $\nu J_{1}+\mu J_{2}$ is an integral of motion for the
Hamiltonian $\mathcal{K}_{(1)}(\psi,J,\epsilon)$. The reduced system
then depends on the coordinates $\chi_1$, $J_1$.

Moreover, by inspection of the series (\ref{eq:K1cal_psi_J}) one finds
that the Hamiltonian for the reduced subsystem can be expressed in
terms of a single complex variable $z=\beta(J_1)^\mu{}e^{i\chi_1}$ and
takes the form $\mathcal{Z}(\chi_{1},J_{1})=\Re[h(z)]$, with $h$ being
a holomorphic function. We refer to
Subsection~\ref{sec:1st_order_dynamics} for the proof of the following
abstract theorem.

\begin{theorem}
  \label{thm:AsymptHamDynB1}
  Let $h\in C^{1}(\overline{B_{1}})$ and suppose $h(z)$ is a
  nonconstant holomorphic function on the open unit disk $B_{1}$. Let
  $\varrho:[0,+\infty[\:\to[0,1[$ be a smooth function such that
  $\varrho'(x)>0$ for $x>0$, $\varrho(0)=0$ and
  $\lim_{x\to+\infty}\varrho(x)=1$. Let $\mathcal{Z}(\chi_{1},J_{1})$
  be the Hamilton function on $\mathbb{R}\times\,]0,+\infty[$ defined
  by
  \[
  \mathcal{Z}(\chi_{1},J_{1})
  = \Re\!\left[h\!\left(\varrho(J_{1})e^{i\chi_{1}}\right)\right].
  \]
  Then, for almost all initial conditions $(\chi_{1}(0),J_{1}(0))$,
  the corresponding Hamiltonian trajectory fulfills
  \begin{equation}
    \lim_{t\to+\infty}\chi_{1}(t)=\chi_{1}(\infty)\in\mathbb{R},
    \textrm{~}\lim_{t\to+\infty}J_{1}(t)=+\infty,
    \label{eq:limInfChiJ}
  \end{equation}
  and
  \begin{equation}
    \lim_{t\to+\infty}J_{1}'(t)
    = \Im\!\left[e^{i\chi_{1}(\infty)}\,
      h'\!\left(e^{i\chi_{1}(\infty)}\right)\right]>0.
    \label{eq:limInfDerChiJ}
  \end{equation}
\end{theorem}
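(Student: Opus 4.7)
The plan is to rewrite the dynamics in the complex coordinate $z=\varrho(J_{1})\,e^{i\chi_{1}}$, which maps the phase space $\mathbb{R}\times\,]0,+\infty[$ onto the universal cover of the punctured unit disk, and to exploit the fact that in this variable the motion reduces, up to a positive time reparametrization, to the Hamiltonian flow of the nonconstant harmonic function $\Re[h]$ on $B_{1}$.

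A direct computation from Hamilton's equations for $\mathcal{Z}$ yields
\[
\dot z = i\,\varrho(J_{1})\,\varrho'(J_{1})\,\overline{h'(z)},
\qquad J_{1}'(t) = \Im\!\left[z\,h'(z)\right],
\]
and $\mathcal{Z}=\Re[h(z)]$ is a constant of motion. Setting $ds/dt=\varrho(J_{1})\varrho'(J_{1})>0$ I would obtain the autonomous equation $dz/ds=i\,\overline{h'(z)}$, a vector field on $\overline{B_{1}}$ that is continuous up to the boundary by $h\in C^{1}(\overline{B_{1}})$ and whose integral curves follow the level sets of $\Re[h]$.

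The next step is the topological analysis of those level sets. The interior critical points of $\Re[h]$ are the zeros of $h'$, a discrete subset of $B_{1}$, and the boundary function $U(\chi):=\Re[h(e^{i\chi})]$ is smooth with discrete critical set (it cannot be constant, or else by the maximum principle $\Re[h]$ would be constant, contradicting the hypothesis on $h$). The corresponding critical values therefore form a countable subset of $\mathbb{R}$, and by Sard's theorem together with a Fubini/coarea argument applied to the smooth submersion $\mathcal{Z}$, only a Lebesgue-null set of initial conditions lies over a critical value. For every regular $c$, the maximum principle forbids closed components of $\{\Re[h]=c\}$ inside $B_{1}$ (such a loop would enclose a subdomain where the nonconstant harmonic function $\Re[h]$ attained an interior extremum), so $\{\Re[h]=c\}\cap\overline{B_{1}}$ is a finite disjoint union of smooth arcs, each meeting $\partial B_{1}$ transversally at two points where $h'\neq 0$. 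On such an arc the vector field $i\,\overline{h'(z)}$ is nowhere vanishing, so the $s$-flow traverses the arc in finite $s$-time from one endpoint to the other; a short computation shows that at a regular boundary point the outward radial component of $i\,\overline{h'(e^{i\chi})}$ equals $\Im\!\left[e^{i\chi}h'(e^{i\chi})\right]$, so the terminating endpoint---the \emph{sink}---is precisely the one where this quantity is strictly positive. Discarding additionally the stable manifolds of possible interior saddles (a countable union of smooth curves, hence Lebesgue-null in the two-dimensional phase space) leaves a full-measure set of admissible initial conditions.

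It then remains to transfer the $s$-time conclusions back to the original time $t$. By continuity, $J_{1}'(t)=\Im[z(t)h'(z(t))]$ converges to $L:=\Im\!\left[e^{i\chi_{\infty}}h'(e^{i\chi_{\infty}})\right]$, where $e^{i\chi_{\infty}}$ is the sink; the sink characterization gives $L>0$, which is exactly (\ref{eq:limInfDerChiJ}). Since $J_{1}'(t)$ is eventually bounded below by a positive constant, $J_{1}(t)\sim L\,t\to+\infty$; this both confirms that $s$ approaches its finite limit $s_{\max}$ only as $t\to+\infty$ and, combined with $z(t)\to e^{i\chi_{\infty}}$, forces $\chi_{1}(t)\to\chi_{1}(\infty)\in\mathbb{R}$, establishing (\ref{eq:limInfChiJ}). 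The main obstacle I anticipate is the topological bookkeeping: the zeros of $h'$ may be infinite in number and accumulate on $\partial B_{1}$, so one must carefully control interior saddle separatrices, verify that $\mathcal{Z}^{-1}$ of the critical values of $\Re[h]|_{B_{1}}$ and $U$ is genuinely Lebesgue-null in $(\chi_{1},J_{1})$, and rule out that the forward $s$-orbit along a regular arc accumulates in the interior---the last point following from $|h'|$ being bounded below on the closure of any regular arc together with the finite length of such arcs.
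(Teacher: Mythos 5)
Your proposal is correct and follows essentially the same route as the paper's proof: conservation of the harmonic function $\Re[h]$, Sard plus a Fubini-type argument to show almost every initial condition lies on a regular level set, exclusion of closed components by the maximum principle so that each component is an arc hitting $\partial B_{1}$ transversally, and identification of the terminal endpoint via the sign of $\Im[e^{i\chi}h'(e^{i\chi})]$. The explicit time reparametrization and the extra discarding of saddle separatrices are harmless stylistic additions (the latter is already subsumed in restricting to regular values).
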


Theorem~\ref{thm:AsymptHamDynB1} yields the desired information about
the asymptotic behavior of the averaged Hamiltonian system in
action-angle coordinates $(J,\psi)$. One has to go back to the
original action-angle coordinates $(I,\varphi)$, however, since in
these coordinates the dynamics of the studied system can be directly
interpreted. This means to invert the canonical transformation
$(I,\varphi)\mapsto(J,\psi)$ which resulted from the
Poincar\'e-von~Zeipel elimination method. Let us remark that the
generating function of this canonical transformation is truncated at
the first order as well. Doing so one derives the following result
whose proof can again be found in
Subsection~\ref{sec:1st_order_dynamics}.

\begin{proposition}
  \label{thm:ConclusionResonant}
  Suppose that conditions (\ref{eq:w2_to_w1_rational}),
  (\ref{eq:ResonantCase}) characterizing the resonant case are
  satisfied. Let $(\varphi_1(t),I_1(t))$ be a trajectory for the
  first-order averaged Hamiltonian (\ref{eq:K_trunc_1}) but expressed
  in the original angle-action coordinates, as introduced in
  Subsection~\ref{results_Hamiltonian}.  Then, for almost all initial
  conditions $(\varphi_{1}(0),I_{1}(0))$,
  \begin{equation}
    \lim_{t\to+\infty}\left(\varphi_{1}(t)-\omega_{1}t\right)
    = \phi(\infty)\in\mathbb{R},\textrm{~~}
    \lim_{t\to+\infty}\frac{I_{1}(t)}{t}=C>0,
    \label{eq:vonZeipel_asympt_domain}
  \end{equation}
  and
  \[
  C = -\frac{\varepsilon\omega_{2}}{2}\,
  f_{\nu}'\!\left(-\left(\phi(\infty)+\frac{\pi}{2}\right)
    \lambda\right),\ \mbox{with}\ f_{\nu}(\varphi)
  = \langle f(\varphi)\rangle_{\mathbb{Z}\nu}.
  \]
\end{proposition}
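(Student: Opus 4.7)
The plan is to apply Theorem~\ref{thm:AsymptHamDynB1} to the two-dimensional reduced subsystem in $(\chi_1,J_1)$ associated with $\mathcal{K}_{(1)}$, and then to pull the resulting asymptotics back to the original action-angle variables via the truncated Poincar\'e-von~Zeipel transformation.

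First, I reduce the four-dimensional flow. Hamilton's equations for $\mathcal{K}_{(1)}$ yield $\dot\psi_2=\omega_2$, while $\nu J_1+\mu J_2$ is an integral of motion since $\mathcal{K}_1$ depends on the angles only through $\chi_1=\mu\psi_1-\nu\psi_2$. The induced planar Hamiltonian is $\mathcal{Z}(\chi_1,J_1)=\epsilon\mu\,\mathcal{K}_1(\chi_1,J_1)$, and using (\ref{eq:K1cal_psi_J}), the reality of $f$, and $\beta(J_1)\in[0,1)$, I can write
\[
\mathcal{Z}(\chi_1,J_1) = \Re\bigl[h(z)\bigr], \qquad z=\varrho(J_1)e^{i\chi_1},\qquad \varrho(J_1)=\beta(J_1)^{\mu},
\]
with $h(z)=-\epsilon\mu\omega_1\sum_{n\ge 1}\sF[f]_{-n\nu}\,i^{n\mu}z^{n}$. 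The function $\varrho$ is smooth, strictly increasing from $0$ at $J_1=0$ to $1$ at $+\infty$; condition (\ref{eq:SumkFfk}) implies $h\in C^{1}(\overline{B_1})$ and holomorphy on $B_1$; condition (\ref{eq:ResonantCase}) guarantees $h$ is nonconstant. Thus the hypotheses of Theorem~\ref{thm:AsymptHamDynB1} are satisfied, yielding, for a.e.\ initial condition, $\chi_1(t)\to\chi_1(\infty)$, $J_1(t)\to+\infty$, and $J_1'(t)\to\Im[e^{i\chi_1(\infty)}h'(e^{i\chi_1(\infty)})]>0$.

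To translate these asymptotics into the $(\varphi_1,I_1)$ variables, I observe that the first-order generating function $S_1$ depends only on $J_1$, not on $J_2$, so $\varphi_2=\psi_2$; combined with the convention $\varphi_2(0)=0$ this gives $\psi_2(t)=\omega_2 t$ exactly. The resonance relation $\mu\omega_1=\nu\omega_2$ then reduces $\chi_1(t)$ to $\mu\bigl(\psi_1(t)-\omega_1 t\bigr)$, whence $\psi_1(t)-\omega_1 t\to\chi_1(\infty)/\mu$. The near-identity canonical change of coordinates provided by $S_1$ satisfies $\varphi_i-\psi_i=\epsilon\partial_{J_i}S_1$ and $I_i-J_i=\epsilon\partial_{\psi_i}S_1$. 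Since every nonresonant Fourier coefficient of $F$ in (\ref{eq:F_vphi_I_eps}) depends on $J_1$ only through $\beta(J_1)$, and $\beta'(J_1)=O(J_1^{-2})$, the series for $\partial_{J_1}S_1$ tends to zero uniformly in $\psi$ as $J_1\to\infty$, whereas $\partial_{\psi_1}S_1$ stays bounded. Consequently $\varphi_1(t)-\omega_1 t\to\phi(\infty):=\chi_1(\infty)/\mu$ and $I_1(t)/t\to C:=\lim J_1'(t)>0$.

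The final step is to recognise the limit value as the stated $f_{\nu}'$ expression. Writing $i^{n\mu}=e^{in\mu\pi/2}$, using $\overline{\sF[f]_{n\nu}}=\sF[f]_{-n\nu}$, and $\omega_2=\lambda\omega_1$, a direct term-by-term computation transforms $\Im[e^{i\chi_1(\infty)}h'(e^{i\chi_1(\infty)})]$ into $-(\epsilon\omega_2/2)\,f_{\nu}'\bigl(-\lambda(\phi(\infty)+\pi/2)\bigr)$ after substituting $\chi_1(\infty)=\mu\phi(\infty)$. I expect the principal obstacle to lie in the third step: verifying that the factor $\beta'(J_1)$ inside every nonresonant Fourier coefficient of $S_1$, together with (\ref{eq:SumkFfk}) and the small-denominator bounds, produces an absolutely convergent series for $\partial_{J_1}S_1$ that actually tends to zero along the trajectory. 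The remaining steps are then either an immediate appeal to the abstract theorem or a bookkeeping Fourier computation.
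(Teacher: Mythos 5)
Your proposal follows essentially the same route as the paper: reduce to the planar $(\chi_1,J_1)$ system, apply Theorem~\ref{thm:AsymptHamDynB1}, and transport the asymptotics back through the truncated generating function $S_{(1)}$, with the final Fourier bookkeeping giving the stated value of $C$. One correction: your claim that $\partial_{\psi_1}S_1$ ``stays bounded'' is not what the paper establishes and is in fact false in general --- the small-denominator sum $\sum_{n}\beta(J_1)^{|n|}/|n+k\lambda|$ diverges like $|\log(1-\beta(J_1))|$ as $\beta(J_1)\to1$, so (via Lemma~\ref{thm:SumBetaLeqLog} and $1-\beta(J_1(t))=O(t^{-1})$) one only gets $\partial_{\varphi_1}S_1=O(\log t)$; this is still $o(t)$ and therefore suffices for $I_1(t)/t\to C$, so your conclusion stands, but the uniform-in-$k$ logarithmic estimate is precisely the delicate point, arguably more so than the $\partial_{J_1}S_1$ bound you flag as the principal obstacle (there the mechanism is $\beta'(J_1)=O(J_1^{-2})$ against a sum of size $O((1-\beta)^{-1})=O(J_1)$, yielding $O(t^{-1})$ as needed for the phase).
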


\subsection{A decoupled system and the leading asymptotic terms}
\label{sec:results_asymptotic}

In the preceding subsection we studied an approximate dynamics derived
with the aid of averaging methods. It turned out that a typical
trajectory reaches for large times an asymptotic domain characterized
by formulas (\ref{eq:vonZeipel_asympt_domain}). Being guided by this
experience as well as by numerical experiments that we have carried
out we suggest that formulas (\ref{eq:vonZeipel_asympt_domain}) in
fact give the correct leading asymptotic terms for the complete
system.

To support this suggestion we now derive the leading asymptotic terms
and a bound on the order of error terms for the original Hamiltonian
equations (\ref{eqmotion1}), (\ref{eqmotion2}) while assuming that the
system already follows the anticipated asymptotic regime. At the same
time, we relax the assumption on smallness of the parameter
$\epsilon$.

Applying the substitution
\[
F(t)=2I(t)+|a(t)|,\mbox{~}\phi(t)=\varphi(t)-bt,
\]
to equations (\ref{eqmotion1}), (\ref{eqmotion2}) we obtain the system
of differential equations
\begin{eqnarray}
  F'(t) & = & \frac{a(t)\, a'(t)}{F(t)+\sqrt{F(t)^{2}-a(t)^{2}}
    \sin(bt+\phi(t))}\,,\label{eq:ODEFDevelopEps}\\
  \noalign{\medskip}
  \phi'(t) & = & -\frac{a(t)\, a'(t)
    \cos(bt+\phi(t))}{\sqrt{F(t)^{2}-a(t)^{2}}
    \left(F(t)+\sqrt{F(t)^{2}-a(t)^{2}}\sin(bt+\phi(t))\right)}\,,
  \label{eq:ODEphiDevelopEps}
\end{eqnarray}
where $a(t)=p_{\theta}-\epsilon f(\Omega t)$ (see (\ref{eq:DefAt}) and
(\ref{eq:Phi_rel_f})). The real function $f(t)$ is supposed to be
continuously differentiable and $2\pi$-periodic. The only assumptions
imposed on $\epsilon>0$ in this subsection are that $\epsilon$ is not
too big so that the function $a(t)$ has no zeroes and is everywhere of
the same sign as $p_{\theta}$. Recall that for definiteness
$p_{\theta}$ is supposed to be positive. Clearly, the functions $a(t)$
and $a'(t)$ are bounded on $\mathbb{R}$.

Equations (\ref{eq:ODEFDevelopEps}) and (\ref{eq:ODEphiDevelopEps})
are nonlinear and coupled together. To decouple them we replace
$\phi(t)$ on the RHS of (\ref{eq:ODEFDevelopEps}) and $F(t)$ on the
RHS of (\ref{eq:ODEphiDevelopEps}) by the respective leading
asymptotic terms (\ref{eq:vonZeipel_asympt_domain}), as learned from
the averaging method. This is done under the assumption that the
solution has already reached the domain $F(t)\geq F_{0}\gg p_{\theta}$
where $F(t)$ is sufficiently large and starts to grow. Thus, to
formulate a problem with decoupled equations, we replace $\phi(t)$ in
(\ref{eq:ODEFDevelopEps}) by the expected limit value
$\phi\equiv\phi(\infty)\in\mathbb{R}$, i.e. the simplified equation
reads
\begin{equation}
  F'(t)=\frac{a(t)\, a'(t)}{F(t)+\sqrt{F(t)^{2}-a(t)^{2}}\,
    \sin(bt+\phi)}\,.
  \label{eq:ODEsimplifiedF}
\end{equation}

As stated in Proposition~\ref{thm:F} below, a solution $F(t)$ of
(\ref{eq:ODEsimplifiedF}) actually grows linearly for large times.
Conversely, equation (\ref{eq:ODEphiDevelopEps}) is analyzed in
Proposition~\ref{thm:asym_phit_F_lin} under the assumption that $F(t)$
grows linearly. In that case $\phi(t)$ is actually shown to approach a
constant value as $t$ tends to infinity. For derivation of
Proposition~\ref{thm:asym_phit_F_lin} the periodicity of the functions
$a(t)$ is not important. It suffices to assume that it takes values
from a bounded interval separated from zero. For the proofs see
Section~\ref{sec:SimplifiedEquation}.

\begin{proposition}
  \label{thm:F}
  Suppose $f\in C^{2}(\mathbb{R})$ is $2\pi$-periodic and
  \[
  \frac{\Omega}{b}\in\mathbb{N},\mbox{~}
  f'\!\left(-\frac{\Omega}{b}\left(\phi+\frac{\pi}{2}\right)\right)<0.
  \]
  Then any solution of (\ref{eq:ODEsimplifiedF}) such that $F(0)$ is
  sufficiently large fulfills
  \[
  F(t)=\epsilon\,\Omega\left|f'\!\left(-\frac{\Omega}{b}
      \left(\phi+\frac{\pi}{2}\right)\right)\right|\, t
  +O\!\left(\log(t)^{2}\right)\textrm{~~as~}t\to+\infty.
  \]
\end{proposition}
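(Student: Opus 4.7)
The first step is to pass to the phase variable $\theta:=bt+\phi$, in which (\ref{eq:ODEsimplifiedF}) becomes
\[
\frac{dF}{d\theta}=\frac{\tilde a(\theta)\,\tilde a'(\theta)}{F+\sqrt{F^{2}-\tilde a(\theta)^{2}}\,\sin\theta},\qquad \tilde a(\theta):=p_{\theta}-\epsilon f\!\bigl(N(\theta-\phi)\bigr),
\]
with $N:=\Omega/b\in\mathbb{N}$ and the prime denoting $d/d\theta$. Here $\tilde a$ is $2\pi/N$-periodic and, by the standing assumption, strictly positive and bounded away from zero. Writing $a_{*}:=\tilde a(-\pi/2)$ and $a_{*}':=\tilde a'(-\pi/2)=-\epsilon N f'\bigl(-N(\phi+\pi/2)\bigr)$, the hypothesis $f'(-N(\phi+\pi/2))<0$ becomes $a_{*}'>0$. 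The elementary bound $F+\sqrt{F^{2}-\tilde a^{2}}\sin\theta\ge \tilde a^{2}/(2F)$ implies $|dF/d\theta|\le CF$, so the solution exists globally and, provided $F(0)$ is taken large enough, stays comfortably above $|\tilde a|_{\infty}$.

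\textbf{Averaging over one period.} The heart of the argument is that for $F$ large the right-hand side concentrates in a window of width $\sim a_{*}/F$ around $\theta=-\pi/2\pmod{2\pi}$. Writing $s=\theta+\pi/2$ and Taylor-expanding both $\sin\theta=-1+s^{2}/2+\cdots$ and $\sqrt{F^{2}-\tilde a^{2}}=F-\tilde a^{2}/(2F)+\cdots$, the denominator reduces near $s=0$ to $(F^{2}s^{2}+\tilde a(\theta)^{2})/(2F)$, and the rescaling $u=Fs/a_{*}$ turns the kernel into the Lorentzian $2a_{*}'/(u^{2}+1)$, whose integral over $\mathbb{R}$ equals $2\pi a_{*}'$. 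Outside this window the denominator is comparable to $F(1+\sin\theta)$ and the contribution is much smaller. Introducing a cut-off $\delta$ separating the two regimes, using the cancellation of odd-in-$s$ corrections coming from the Taylor expansion of $\tilde a\tilde a'$ about $-\pi/2$, and exploiting a precise matching between the Lorentzian tails (cut at $M=\delta F/a_{*}$) and the dominant $1/[F(1+\sin\theta)]$ behaviour of the outer integrand, one expects to prove for fixed $F$
\[
\Psi(F):=\int_{0}^{2\pi}\frac{\tilde a(\theta)\tilde a'(\theta)\,d\theta}{F+\sqrt{F^{2}-\tilde a(\theta)^{2}}\sin\theta}=2\pi a_{*}'+O\!\bigl(F^{-1}\log F\bigr).
\]

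\textbf{Iteration and main obstacle.} Set $\theta_{k}:=-\pi/2+2\pi k$ and $F_{k}:=F(\theta_{k})$. Combining the previous estimate with the a priori control on the variation of $F$ over one period yields the recursion $F_{k+1}=F_{k}+2\pi a_{*}'+O(F_{k}^{-1}\log F_{k})$. Since $a_{*}'>0$, telescoping produces $F_{k}=2\pi a_{*}'k+O(\log^{2}k)$; the per-period bound $|F(\theta)-F(\theta_{k})|=O(1)$ on $[\theta_{k},\theta_{k+1}]$ then gives $F(\theta)=a_{*}'\theta+O(\log^{2}\theta)$. Converting back via $\theta=bt+\phi$ and using the identity $b\,a_{*}'=-\epsilon\Omega f'(-(\Omega/b)(\phi+\pi/2))=\epsilon\Omega\bigl|f'(-(\Omega/b)(\phi+\pi/2))\bigr|$ delivers the stated asymptotic. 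The hardest part is the averaging step: the kernel is nearly singular at $\theta=-\pi/2$ rather than smooth, so the classical averaging theorems for regular vector fields do not apply. Establishing the $O(F^{-1}\log F)$ remainder with the right constants demands a careful balance between an inner rescaled Lorentzian calculation and an outer asymptotic expansion, together with a precise cancellation between the two, and must also accommodate the $N$ internal oscillations of $\tilde a$ crossing the narrow resonance window.
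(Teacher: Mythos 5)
Your strategy coincides with the paper's: localize the growth of $F$ at the phases where $\sin(bt+\phi)=-1$, show that each period contributes $2\pi\tilde a'$ evaluated at the resonant phase up to an error $O(F^{-1}\log F)$, and telescope to obtain $F=\alpha t+O(\log(t)^2)$. Your bookkeeping of the constant, $b\,a_*'=-\epsilon\Omega f'(-(\Omega/b)(\phi+\pi/2))=\epsilon\Omega\,|f'(-(\Omega/b)(\phi+\pi/2))|$, and of the summed error $\sum_k k^{-1}\log k=O(\log(k)^2)$ is correct. The gap is that the central estimate --- what you call $\Psi(F)=2\pi a_*'+O(F^{-1}\log F)$, and what the paper proves in Lemmas~\ref{thm:Preparatoryht} and \ref{thm:Preparatoryht02Pi} --- is only announced: you write that ``one expects to prove'' it via an inner/outer matched expansion with ``a precise cancellation'' that you do not exhibit. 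Since everything else in the argument is routine, this step is the entire content of the proposition, and as written it is missing.

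The paper's proof shows that no matched asymptotics is needed. Working on a quarter-period with the kernel $\Psi(x,u,t)=\bigl(x-\sqrt{x^{2}-u^{2}}\cos t\bigr)^{-1}$, three elementary facts suffice: (i) the exact antiderivative $\int_{0}^{\pi/2}\Psi(x,u,t)\,\dd t=\frac{2}{u}\arctan\!\bigl(\frac{x+\sqrt{x^{2}-u^{2}}}{u}\bigr)=\frac{\pi}{u}+O(x^{-1})$, which is your Lorentzian integral computed in closed form, with no cutoff $\delta$ and no tail matching; (ii) the self-improving bounds $|\partial_{x}\Psi|\leq\frac{2}{x}\Psi$ and $|\partial_{u}\Psi|\leq\frac{3}{u}\Psi$, which control both the variation of $F$ within one period (your ``a priori control'', which incidentally does not follow from the crude bound $|F'|\leq CF$ but from the sign of $\varrho$ near the resonant phase) and the replacement of $\tilde a(\theta)$ by its resonant value $a_*$ --- this is also where the $N$ internal oscillations of $\tilde a$ are absorbed, since only a Lipschitz bound on $\tilde a$ and $\tilde a'$ near the resonance enters; and (iii) the estimate $\int_{0}^{\pi/2}\Psi(x,u,t)\,t\,\dd t=O(x^{-1}\log x)$, obtained from $t\leq\frac{\pi}{2}\sin t$ and another explicit integral, which is the sole source of the logarithm in the error term. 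If you replace your inner/outer decomposition by these three estimates, your outline closes into a proof; without them, the decisive step remains unproven.
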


\begin{proposition}
  \label{thm:asym_phit_F_lin}
  Suppose $a(t)\in C^{1}(\mathbb{R})$ fulfills
  \[
  A_{1}\leq a(t)\leq A_{2},\mbox{\ }|a'(t)|\leq A_{3},
  \]
  for some positive constants $A_{1}$, $A_{2}$, $A_{3}$. Furthermore,
  suppose $F(t)\in C(\mathbb{R})$ has the asymptotic behavior
  \begin{equation}
    F(t)=\alpha t+O\!\left(\log(t)^{2}\right)\quad\mbox{as\ }t\to+\infty,
    \label{eq:Ft_asympt}
  \end{equation}
  with a positive constant $\alpha$. If $\phi(t)$ obeys the
  differential equation (\ref{eq:ODEphiDevelopEps}) on a neighborhood
  of $+\infty$, then there exists a finite limit
  $\lim_{t\to+\infty}\phi(t)=\phi(\infty)$ and
  \begin{equation}
    \phi(t)=\phi(\infty)+O\!\left(\frac{\log(t)}{t}\right)
    \quad\mbox{as\ }t\to+\infty.
    \label{eq:phit_asympt}
  \end{equation}
\end{proposition}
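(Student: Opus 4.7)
The principal difficulty is that the right-hand side of \eqref{eq:ODEphiDevelopEps} does not decay pointwise as $s\to+\infty$: one can show $|\phi'(s)|=O(1/s)$ but no better, and $1/s$ is not integrable at infinity. Convergence of $\phi(t)$ must therefore come from cancellation of the rapid oscillations in $\cos(bs+\phi(s))$. My plan is to rewrite the equation in the form $\phi'(s)=-\Lambda'(s)+\mathcal{R}(s)$, where $\Lambda(s)$ is an explicit oscillatory primitive of size $O(\log s/s^{2})$ and $\mathcal{R}(s)$ is a remainder satisfying $\int_{t}^{+\infty}|\mathcal{R}(s)|\,ds=O(\log t/t)$; both conclusions then follow by integrating from $t$ to $+\infty$.

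I begin with uniform estimates. Set $\theta(s):=bs+\phi(s)$, $h(s):=\sqrt{F(s)^{2}-a(s)^{2}}$, and $u(s):=F(s)+h(s)\sin\theta(s)$. By \eqref{eq:Ft_asympt} there exists $T_{0}$ large such that $F(s)\geq\alpha s/2\geq 2A_{2}$ for $s\geq T_{0}$, hence $h(s)\in[F(s)/2,\,F(s)]$. The algebraic identity
\[
h(s)\,u(s)\geq h(s)\bigl(F(s)-h(s)\bigr)=\frac{a(s)^{2}\,h(s)}{F(s)+h(s)}\geq\frac{A_{1}^{2}}{3}
\]
provides a uniform lower bound on the denominator of \eqref{eq:ODEphiDevelopEps}. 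An AM--GM argument in the critical phase region $\theta\equiv 3\pi/2\pmod{2\pi}$ (balancing $a^{2}/F$ against $F\cos^{2}\theta$) sharpens the trivial pointwise bound to $|\phi'(s)|\leq C_{1}/s$, while $u(s)\in[A_{1}^{2}/(3\alpha s),\,2\alpha s]$ yields $|\log u(s)|\leq C\log s$.

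The heart of the argument is the elementary antiderivative $\int\cos\theta/(F+h\sin\theta)\,d\theta=h^{-1}\log(F+h\sin\theta)$, valid for frozen $F,h$. It motivates the definition
\[
\Lambda(s):=\frac{a(s)\,a'(s)}{b\,h(s)^{2}}\log u(s),\qquad|\Lambda(s)|=O\!\left(\frac{\log s}{s^{2}}\right).
\]
Differentiating, inserting the expansion $u'(s)=F'(s)+h'(s)\sin\theta+h(s)(b+\phi'(s))\cos\theta$, and eliminating $\cos\theta/(hu)$ by means of the defining ODE yields the identity $\phi'(s)=-\Lambda'(s)+\mathcal{R}(s)$ with
\[
\mathcal{R}(s)=\Bigl(\tfrac{aa'}{bh^{2}}\Bigr)'\!(s)\log u(s)+\frac{a(s)a'(s)\bigl(F'(s)+h'(s)\sin\theta(s)\bigr)}{b\,h(s)^{2}\,u(s)}-\frac{\phi'(s)^{2}}{b}.
\]

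Integrating from $t$ to $+\infty$: the $\Lambda$-term telescopes to $-\Lambda(t)=O(\log t/t^{2})$; the last remainder term is $O(1/s^{2})$ by Step~2 and integrates to $O(1/t)$; and the first remainder term is $O(\log s/s^{2})$ (formally requiring $a\in C^{2}$, but recoverable under only $a\in C^{1}$ by writing $aa'=(a^{2})'/2$ and transferring a derivative via one further integration by parts), contributing the dominant tail $O(\log t/t)$. The middle remainder term, whose slow coefficient $aa'(F'+h'\sin\theta)/(bh^{2})$ is $O(1/s^{2})$ but which is multiplied by $1/u(s)$ of pointwise size up to $O(s)$, is handled by splitting $1/u=1/a+(1/u-1/a)$: the first piece gives an integrable $O(1/s^{2})$ contribution, while the second, mean-zero over each $\theta$-period, is integrated by parts against its explicit bounded $\theta$-primitive (obtained via Weierstrass substitution) to yield a tail of order $O(1/t)$. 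The principal obstacle is precisely this last estimate: the integrability of the factor $1/u(s)$ must be extracted from its oscillatory structure rather than from any pointwise decay, and the careful bookkeeping constitutes the main technical step of the proof.
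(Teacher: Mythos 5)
Your argument has a genuine gap, and it begins with the premise. You assert that $\int_t^\infty|\phi'(s)|\,\mathrm{d}s$ cannot converge absolutely and that cancellation of the oscillations of $\cos(bs+\phi(s))$ is therefore essential. This is not the case: the pointwise size $|\phi'(s)|\sim 1/s$ is attained only on a phase window of width $O(1/s)$ around $\sin(bs+\phi)=-1$, and elsewhere the vanishing of the numerator compensates the small denominator. The paper's proof is a direct absolute estimate: after changing variables so that the phase becomes the integration variable, each period is split into the half where $\sin\geq0$ (denominator bounded below by $F$) and the half where $\sin\leq0$, yielding
\[
\int_{\tau_k}^{\tau_{k+1}}|\phi'(t)|\,\mathrm{d}t
= O\!\left(\frac{\log F}{F^{2}}\right)=O\!\left(\frac{\log k}{k^{2}}\right),
\]
which sums to the tail bound $O(\log t/t)$. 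No oscillatory primitive is needed.

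The more serious defect is that your decomposition differentiates $F$. The proposition assumes only $F\in C(\mathbb{R})$ with the asymptotics (\ref{eq:Ft_asympt}); the quantities $F'$, $h'$, $u'$ entering $\Lambda'$ and $\mathcal{R}$ need not exist, and even when they do, (\ref{eq:Ft_asympt}) gives no control over them. Worse, your claim that the ``slow coefficient'' $aa'(F'+h'\sin\theta)/(bh^{2})$ is $O(1/s^{2})$ presupposes $F',h'=O(1)$. In the intended application $F$ solves (\ref{eq:ODEsimplifiedF}), so $F'=aa'/u$ and $h'=-aa'\sin\theta/u$ are of order $s$ precisely at the critical phase where $u\sim a^{2}/(2F)$; there the middle remainder term has pointwise size $O(1)$ and its naive absolute value integrates to $\Theta(1/s)$ per period, so your splitting $1/u=1/a+(1/u-1/a)$ followed by integration by parts (which would in addition require $F''$) cannot close the estimate. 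In that special case one checks that $F'+h'\sin\theta=aa'\cos^{2}\theta/u$, so the middle term equals $(\phi')^{2}/b$ identically and cancels your last remainder term exactly --- a cancellation your bookkeeping does not record and which is unavailable for a general $F$ satisfying only the stated hypotheses. As written, the proposal does not establish the proposition.
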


{\scshape Remarks.} (i)\ Let us point out an essential difference
between equations (\ref{eq:ODEFDevelopEps}) and
(\ref{eq:ODEphiDevelopEps}). Note that for all $F,a,s\in\mathbb{R}$
such that $F\geq|a|>0$ one has
\begin{equation}
  \left|\frac{a\cos(s)}{F+\sqrt{F^{2}-a^{2}}\sin(s)}\right| \leq 1,
  \label{eq:aux1}
\end{equation}
and, consequently, it follows from (\ref{eq:ODEphiDevelopEps}) that
\begin{equation}
  |\phi'(t)|\leq\frac{|a'(t)|}{\sqrt{F(t)^{2}-a(t)^{2}}}\,.
  \label{eq:phi_der_estim}
\end{equation}
Thus the RHS of (\ref{eq:ODEphiDevelopEps}) is inversely proportional
to $F(t)$. The point is that if the denominator in (\ref{eq:aux1})
becomes very small for those $s$ for which $\sin(s)=-1$ then one gets
a compensation by the vanishing numerator. Nothing similar can be
claimed, however, for
equation (\ref{eq:ODEFDevelopEps}).\\
(ii)\ The replacement of the phase $\phi(t)$ by a constant in
(\ref{eq:ODEFDevelopEps}) was quite crucial for derivation of the
result stated in Proposition~\ref{thm:F}. In fact, suppose that $F(t)$
is sufficiently large. In the case of the original equation
(\ref{eq:ODEFDevelopEps}), too, essential contributions to the
increase of $F(t)$ are achieved at the moments of time for which
$\sin(bt+\phi(t))=-1$. If $\phi(t)$ equals a constant then these
moments of time are well defined and the growth of $F(t)$ can be
estimated. On the contrary, without a sufficiently precise information
about $\phi(t)$ one loses any control on the growth of $F(t)$.
\smallskip

Propositions~\ref{thm:F} and \ref{thm:asym_phit_F_lin} can also be
interpreted in the following way. Let us pass from the differential
equations (\ref{eq:ODEFDevelopEps}), (\ref{eq:ODEphiDevelopEps}) to
the integral equations
\begin{eqnarray*}
  &  & \hspace{-0.3em}F(t)-F(0)
  -\int_{0}^{t}\frac{a(s)\, a'(s)}{F(s)
    +\sqrt{F(s)^{2}-a(s)^{2}}\,\sin(bs+\phi(s))}\,\mbox{d}s=0,\\
  \noalign{\medskip} &  & \hspace{-0.3em}\phi(t)-\phi(\infty)
  -\int_{t}^{\infty}\frac{a(s)\, a'(s)\cos(bs+\phi(s))}
  {\sqrt{F(s)^{2}-a(s)^{2}}
    \left(F(s)+\sqrt{F(s)^{2}-a(s)^{2}}\sin(bs+\phi(s))\right)}\,
  \mbox{d}s = 0.
\end{eqnarray*}
Suppose $\phi(\infty)$ satisfies
$\alpha=-\epsilon\,\Omega{}f'(-(\Omega/b)(\phi(\infty)+\pi/2))>0$. If
$F(0)$ is sufficiently large then the functions
\[
F(t)=\alpha t+F(0),\ \phi(t)=\phi(\infty),\ t>0,
\]
can be regarded as an approximate solution of this system of integral
equations with errors of order $O(\log(t)^{2})$ for the first equation
and of order $O(\log(t)/t)$ for the second one.

One has to admit, however, that this argument still does not represent
a complete mathematical proof of the asymptotic behavior of the
action-angle variables $I(t)=(F(t)-|a(t)|)/2$,
$\varphi(t)=bt+\phi(t)$. So far we have analyzed either the dynamics
generated by an approximate averaged Hamiltonian in
Subsection~\ref{sec:results_averaged} or a simplified decoupled system
in the current subsection. Moreover, in the latter case it should be
emphasized that the simplified equations were derived under the
essential assumption that the dynamical system had already reached the
regime characterized by an acceleration with an unlimited energy
growth (this is reflected by the assumption that $F(0)$ is
sufficiently large). Nevertheless, on the basis of this analysis as
well as on the basis of numerical experiments we formulate the
following conjecture.

\begin{conjecture}
  \label{thm:Conj_asympt}
  If $\Omega/b\in\mathbb{N}$ then the regime of acceleration for the
  original (true) dynamical system is described by the asymptotic
  behavior
  \begin{equation}
    I(t) = Ct+O\!\left(\log(t)^{2}\right)\!,\
    \varphi(t)=bt+\phi(\infty)+O\!\left(\frac{\log(t)}{t}\right)\ \
    \mbox{as\ }t\to+\infty,
    \label{eq:I_vphi_asympt}
  \end{equation}
  where $\phi(\infty)$ is a real constant and
  \begin{equation}
    C = -\frac{1}{2}\,\epsilon\Omega
    f'\!\left(-\xi\right)>0,\,\mbox{with\ \ }
    \xi = \frac{\Omega}{b}\left(\phi(\infty)+\frac{\pi}{2}\right).
    \label{eq:C_xi_asympt_def}
  \end{equation}
\end{conjecture}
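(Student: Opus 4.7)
The plan is to combine Propositions~\ref{thm:F} and \ref{thm:asym_phit_F_lin} via a fixed-point argument applied to the coupled integral formulation of (\ref{eq:ODEFDevelopEps})--(\ref{eq:ODEphiDevelopEps}) written at the end of the subsection. Each of those propositions handles one half of the coupling in isolation: if the phase is frozen at a constant $\phi(\infty)$ with $f'(-\xi)<0$, then $F$ grows linearly at the conjectured rate $\alpha=2C$ with error $O(\log(t)^{2})$; if $F$ grows at any positive linear rate, then $\phi$ converges with error $O(\log(t)/t)$. The remaining task is to glue these two halves together for the true coupled system.

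First I would set up a Banach space $\mathcal{X}_{T}$ of pairs $(F,\phi)$ defined on $[T,\infty)$, with $T$ large, of the form
\[
F(t)=\alpha t + R_{F}(t), \qquad \phi(t)=\phi_{\infty}+R_{\phi}(t),
\]
where $\alpha=2C$ and $\phi_{\infty}$ are linked through (\ref{eq:C_xi_asympt_def}), and $R_{F}$, $R_{\phi}$ sit in weighted norms tailored to the expected error orders $O(\log(t)^{2})$ and $O(\log(t)/t)$. On $\mathcal{X}_{T}$ I would define a map $\Psi:(F,\phi)\mapsto(\tilde{F},\tilde{\phi})$ by inserting $(F,\phi)$ into the right-hand sides of the two integral equations and then solving the $\tilde{F}$-equation forwards from $t=T$ with prescribed initial value, and the $\tilde{\phi}$-equation by the convergent tail integral from $+\infty$ with prescribed limit $\phi_{\infty}$. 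A fixed point of $\Psi$ is precisely a solution of the true coupled system with the conjectured asymptotics; one then checks $\Psi$ is a contraction on a small ball of $\mathcal{X}_{T}$ for $T$ sufficiently large.

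The key estimates are oscillatory-integral bounds of the same type that already underlie Proposition~\ref{thm:F}. The $\tilde{\phi}$ half is easy: the benign pointwise bound (\ref{eq:phi_der_estim}) derived from the cancellation (\ref{eq:aux1}) makes $\tilde{\phi}$ Lipschitz-small in the inputs once $F$ grows linearly. The $\tilde{F}$ half is where the work is: one must show that replacing $\phi$ by a perturbation $\phi+\delta\phi$ with $\|\delta\phi\|=O(\log(t)/t)$ produces only an $O(\log(t)^{2})$ change in the integral defining $\tilde{F}$, even though the critical contributions come from the near-resonant windows where $\sin(bs+\phi(s))\approx-1$ and the denominator is close to vanishing. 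To pass from a constructed fixed point to the conjecture for almost all initial data, I would follow with a Gronwall/shooting argument showing that any trajectory entering a suitable region of phase space (large $F$, phase close to a resonant value with $f'(-\xi)<0$) is attracted to one of the fixed-point solutions; the exceptional set would consist of initial conditions lying on the stable set of non-accelerating invariant structures of the averaged flow.

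The main obstacle, and presumably the reason the authors leave the statement as a conjecture, is the delicate sensitivity of the linear-growth mechanism for $F$ to slow drift of $\phi$. A uniform stationary-phase-type bound on the contribution of each resonant window, valid simultaneously for all admissible $(F,\phi)\in\mathcal{X}_{T}$ and summable over the long time scale $t\gg 1/\epsilon$, is exactly what the heuristic in Remark~(ii) warns is missing; obtaining it seems to require quantitative control on the second-order averaging error that is not provided by the first-order analysis of Subsection~\ref{sec:results_averaged}. A secondary difficulty is identifying the measure-zero exceptional set: ruling out subtle mode-locking or long-time trapping phenomena that could prevent a positive-measure set of trajectories from ever reaching the acceleration regime would likely require hyperbolic or KAM-type techniques beyond the scope of the averaging and decoupling arguments developed in the paper.
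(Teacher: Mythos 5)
The statement you are trying to prove is, in the paper, explicitly a conjecture: the authors do not prove it, and the only support they offer is the pair of decoupled results (Propositions~\ref{thm:F} and \ref{thm:asym_phit_F_lin}) together with numerical evidence. Your proposal correctly identifies those two propositions as the available building blocks and sketches a plausible program --- a weighted fixed-point scheme on the coupled integral equations, followed by an attraction argument for generic initial data --- but it is a program, not a proof, and you concede as much in your final paragraph. The decisive missing step is the contraction estimate for the $\tilde F$-component: one must show that a phase drift $\delta\phi=O(\log(t)/t)$ perturbs the resonant-window contributions to $F$ by only $O(\log(t)^{2})$, uniformly over the admissible class and summably over time. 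Saying that this bound is \emph{of the same type} as those underlying Proposition~\ref{thm:F} does not make it available: that proposition is proved (via Lemmas~\ref{thm:Preparatoryht} and \ref{thm:Preparatoryht02Pi}) only for a strictly constant phase, and the paper's Remark~(ii) explains exactly why the argument is fragile there --- the growth of $F$ is generated in narrow windows where the denominator $F+\sqrt{F^{2}-a^{2}}\sin(bt+\phi)$ nearly vanishes, and without pointwise control of $\phi(t)$ inside those windows one loses control of the growth of $F$. Establishing that the singular-window analysis survives a time-dependent perturbation of the phase is the whole difficulty, and your proposal leaves it untouched.

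A second, independent gap is the passage from existence of one solution with the conjectured asymptotics (which is all a fixed point of your map $\Psi$ would deliver, and only for data already deep in the regime $F(T)\gg p_{\theta}$) to the claim that actual trajectories entering the acceleration regime obey these asymptotics. The Gronwall/shooting argument for attraction to the fixed-point family and the identification of the exceptional set are stated as intentions without estimates; ruling out trapping or mode-locking would require tools (hyperbolicity, KAM) that neither you nor the paper develops. In sum, your proposal restates in more structured language the heuristic decoupling argument the authors themselves give at the end of Subsection~\ref{sec:results_asymptotic}, and it stops at the same obstruction that led them to label the statement a conjecture rather than a theorem.
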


\subsection{Guiding center coordinates}
\label{sec:results_guiding}

Being given the asymptotic relations (\ref{eq:I_vphi_asympt}),
(\ref{eq:C_xi_asympt_def}) it is desirable to describe the accelerated
motion in terms of the original Cartesian coordinates $q$. The
description becomes more transparent if the motion is decomposed into
a motion of the guiding center $X$ and a relative motion of the
particle with respect to this center which is characterized by a
gyroradius vector $R$ and a gyrophase $\vartheta$ \cite{Littlejohn}.
Since the presented results have a direct physical interpretation, in
this subsection we make an exception and give the formulas using now
all physical constants (including $m$ and $e$). Additional details and
derivations are postponed to Section~\ref{sec:Conclusion}.

Let $v(q,p,t)=(p-eA(q,t))/m$ be the velocity. We write $q=X+R$ where,
by definition,
\begin{displaymath}
  X(q,p,t) = q+\frac{1}{\omega_c}\, v^{\perp}(q,p,t),\ 
  R(q,p,t) = q-X(q,p,t) = -\frac{1}{\omega_c}\, v^{\perp}(q,p,t)
\end{displaymath}
are the guiding center field and the gyroradius field, respectively.
In what follows, we use the polar decompositions
\begin{displaymath}
  q = r\,(\cos\theta,\sin\theta),\ X = |X|(\cos\chi,\sin\chi),\
  R = |R|(\cos\vartheta,\sin\vartheta).
\end{displaymath}
Concerning the geometrical arrangement, one has the relation
\begin{equation}
  \vartheta=\varphi+\chi-\frac{\pi}{2}\ \ (\mbox{mod}\ 2\pi)
  \label{eq:varth_rel_varphi_chi}
\end{equation}
(with $\varphi$ being introduced in
Subsection~\ref{results_Hamiltonian}, for the derivation see
Section~\ref{sec:Conclusion}).

The quantities $X$, $R$ were introduced (under different names) and
studied in \cite{AschStovicek} where one can also find several
formulas given below, notably those given in
(\ref{eq:centervelocity}). Observe that
$|p|^{2}=p_{r}^{\,2}+p_{\theta}^{\,2}/r^{2}$ and, according to
(\ref{eq:DefHradial}) and (\ref{eq:I_fce_polar}),
\begin{equation}
  \label{eq:rel_H_I}
  H = H_{\text{rad}}+\frac{\omega_ca}{2}
  = \omega_c\!\left(I+\frac{1}{2}\,(|a|+a)\right),
  \text{~~with~}
  a(t) = p_\theta-\frac{e}{2\pi}\,\Phi(t).
\end{equation}
Using these relations one derives that
\begin{equation}
  \frac{m\omega_{c}^{\,2}}{2}\,|X|^{2}
  =  H-\omega_{c}\, p_{\theta}+\frac{e\omega_{c}}{2\pi}\,\Phi,~~
  \frac{m\omega_{c}^{\,2}}{2}\,|R|^{2} = H.
  \label{eq:centervelocity}
\end{equation}
Observe from (\ref{eq:centervelocity}) that if $a(t)$ is an everywhere
positive function then $|R(t)|>|X(t)|$ and so the center of
coordinates always stays in the domain encircled by the spiral-like
trajectory.  On the contrary, if $a(t)$ is an everywhere negative
function then the center of coordinates is never encircled by the
trajectory.

For a Hamiltonian trajectory $(q(t),p(t))$ put $H(t)=H(q(t),p(t),t)$.
Suppose again that $p_\theta>0$ and hence $a(t)>0$. Applying
Conjecture~\ref{thm:Conj_asympt} and recalling (\ref{eq:Phi_rel_f})
one has
\begin{equation}
  \label{eq:I_of_t_asympt}
  I(t) = -\frac{e}{4\pi}\,
  \Phi'\!\left(-\frac{1}{\omega_{c}}\left(\phi(\infty)
      +\frac{\pi}{2}\right)\right)t
  +O\!\left(\log(t)^2\right)
  \text{~~as~}t\to+\infty.
\end{equation}
Using (\ref{eq:rel_H_I}) and recalling definition (\ref{eq:rate}) of
the acceleration rate $\gamma$ one finds the positive value
\begin{equation}
  \label{eq:gamma}
  \gamma = \lim_{\tau\to\infty}\frac{H(\tau)}{\tau}
  = -\frac{e\omega_{c}}{4\pi}\,
  \Phi'\!\left(-\frac{1}{\omega_{c}}\left(\phi(\infty)
      +\frac{\pi}{2}\right)\right)\! > 0.
\end{equation}
From (\ref{eq:centervelocity}), (\ref{eq:rel_H_I}) and
(\ref{eq:I_of_t_asympt}) one also deduces the asymptotic behavior of
the guiding center and the gyroradius,
\begin{equation}
  |X(t)| = \sqrt{\frac{2\gamma t}{m\omega_c^{\,2}}}\,
  +O\!\left(\frac{\log(t)^{2}}{\sqrt{t}}\right)\!,\
  |R(t)| = \sqrt{\frac{2\gamma t}{m\omega_c^{\,2}}}\,
  +O\!\left(\frac{\log(t)^{2}}{\sqrt{t}}\right)\ \ \mbox{as\ }
  t\to+\infty.
  \label{eq:XR_norm_asympt}
\end{equation}
Now it is obvious why the resonance has also consequences for the
drift. The growing energy $H$ comes along with a growing distance of
the guiding center from the origin. Thus in $q$-space the particle
oscillates around a drifting center which goes out at the same rate as
the radius grows.

From Conjecture~\ref{thm:Conj_asympt}, with a bit of additional
analysis (see Section~\ref{sec:Conclusion}), one can also derive
consequences for the asymptotic behavior of the angle variables. One
has, as $t\to+\infty$,
\begin{equation}
  \label{eq:chi_vtheta_asympt}
  \chi(t) = D\log(\omega_ct)+\chi(\infty)+o(1),~~
  \vartheta(t) = \omega_ct+D\log(\omega_ct)+\vartheta(\infty)+o(1),
\end{equation}
where $D$, $\chi(\infty)$ and $\vartheta(\infty)$ are real constants.
Relations (\ref{eq:XR_norm_asympt}) and (\ref{eq:chi_vtheta_asympt})
give a complete information about the asymptotic behavior of the
trajectory $q(t)=X(t)+R(t)$.

Finally, in connection with formula (\ref{eq:gamma}) let us point out
that the studied system behaves for large times almost as a ``kicked''
system. Note that, for a fixed $a>0$,
\begin{displaymath}
  \frac{a}{x-\sqrt{x^{2}-a^{2}}\,\cos(t)}
  = \sum_{n=-\infty}^{+\infty}
  \left(\frac{x-a}{x+a}\right)^{\!|n|/2}e^{int}
  \,\to\, 2\pi\,\delta_{2\pi}(t)
  \text{~~as~}x\to+\infty,
\end{displaymath}
where $\delta_A(t)=\sum_{n=-\infty}^{+\infty}\delta(t-nA)$ stands for
the $A$-periodic prolongation of the Dirac $\delta$ function. Since
$H'(t)=-e\theta'(t)\Phi'(t)/(2\pi)$ (see (\ref{eq:D_der})),
$\varphi(t)\approx\omega_ct+\phi(\infty)$ and, in view of
(\ref{eq:canonic_r}),
\begin{displaymath}
  \theta' = \frac{a}{mr^2}+\frac{\omega_c}{2}
  = \frac{\omega_{c}a}{2\!\left(2I+a+2\sqrt{I(I+a)}\,
      \sin\varphi\right)}+\frac{\omega_c}{2}
\end{displaymath}
one has
\begin{displaymath}
  H'(t) \approx -\frac{e}{2}\,
  \Phi'\!\left(-\frac{1}{\omega_{c}}\left(\phi(\infty)
      +\frac{\pi}{2}\right)\right)
  \delta_{2\pi/\omega_c}\!\left(t+
    \frac{1}{\omega_c}\!\left(\phi(\infty)+\frac{\pi}{2}\right)\right).
\end{displaymath}
Thus the system gains the main contributions to the energy growth in
narrow intervals around the instances of time for which
$\omega_ct=-\phi(\infty)-\pi/2$ (mod~$2\pi$). According to
(\ref{eq:varth_rel_varphi_chi}), these are exactly those instances of
time when $\vartheta(t)-\chi(t)\approx\pi$ (mod~$2\pi$), i.e. when the
particle passes next to the singular flux line.

\section{Derivation of the Hamiltonian in action-angle coordinates}
\label{sec:TheModel}

In this section we derive the transformation equations from
coordinates $r$, $p_r$ to action-angle coordinates $I$, $\varphi$.

Let us first note that the equations of motion for the radial
Hamiltonian (\ref{eq:DefHradial}) have a well-defined unique solution
for any initial condition and for all times. They are equivalent to
the nonlinear second-order differential equation
\begin{equation}
  r''+\frac{b^{2}}{4}\,r = \frac{a(t)^{2}}{r^{3}}\,.
  \label{eq:r_nonlin_2order}
\end{equation}

\begin{proposition}
  \label{thm:complete_rt}
  Suppose $a(t)$ is a real continuously differentiable function
  defined on $\mathbb{R}$ having no zeros. Then for any initial
  condition $r(t_{0})=r_{0}$, $r'(t_{0})=r_{1}$, with
  $(t_{0},r_{0},r_{1})\in\mathbb{R}\times\,]0,+\infty[\,\mathbb{\times{}R}$,
  there exists a unique solution of the differential equation
  (\ref{eq:r_nonlin_2order}) defined on the whole real line
  $\mathbb{R}$ and satisfying this initial condition.
\end{proposition}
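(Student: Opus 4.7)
The plan is to combine standard local existence/uniqueness with an a priori energy estimate that prevents $r$ from escaping either to $0$ or to $+\infty$ in finite time.

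First, I would note that the right-hand side of (\ref{eq:r_nonlin_2order}), viewed as a first-order system $(r,r')' = (r', -b^{2}r/4 + a(t)^{2}/r^{3})$, is continuously differentiable on the open set $\mathbb{R}\times\,]0,+\infty[\,\times\mathbb{R}$ because $a\in C^{1}(\mathbb{R})$. Picard--Lindel\"of therefore yields a unique maximal solution on some open interval $J\subset\mathbb{R}$ containing $t_{0}$; the task is to show $J=\mathbb{R}$.

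The key tool is the radial Hamiltonian (\ref{eq:DefHradial}) evaluated along the trajectory, $H(t) := H_{\mathrm{rad}}(r(t),r'(t),t)$. A direct computation using the equation of motion gives
\[
  H'(t) = \frac{\partial H_{\mathrm{rad}}}{\partial t}(r(t),r'(t),t)
  = \frac{a(t)\,a'(t)}{r(t)^{2}}.
\]
Since each of the three summands in $H_{\mathrm{rad}}$ is nonnegative, $H(t)\geq a(t)^{2}/(2r(t)^{2})$, which rewrites as $1/r(t)^{2}\leq 2H(t)/a(t)^{2}$. Substituting this into the formula for $H'$ yields the differential inequality
\[
  |H'(t)| \leq 2\,\frac{|a'(t)|}{|a(t)|}\,H(t).
\]
Fix any compact interval $[T_{1},T_{2}]\subset J$ containing $t_{0}$. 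Because $a$ is continuous and nowhere zero on $[T_{1},T_{2}]$, the ratio $|a'/a|$ is bounded there, and Gronwall's lemma gives an upper bound $H(t)\leq M$ for $t\in[T_{1},T_{2}]\cap J$.

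Now the bound $H\leq M$ controls $r(t)$ from both sides on any compact subinterval of $J$: from $H\geq b^{2}r^{2}/8$ we get $r(t)\leq\sqrt{8M}/b$, and from $H\geq a(t)^{2}/(2r^{2})$ together with the positive lower bound of $|a|$ on $[T_{1},T_{2}]$ we get a positive lower bound on $r(t)$. Hence $(r(t),r'(t))$ remains in a compact subset of $]0,+\infty[\,\times\mathbb{R}$ as $t$ approaches either endpoint of $J\cap[T_{1},T_{2}]$, contradicting the standard blow-up alternative unless $J\supset[T_{1},T_{2}]$. Since $T_{1},T_{2}$ were arbitrary, $J=\mathbb{R}$.

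There is no serious obstacle; the only point requiring a little care is ensuring that $r$ cannot reach $0$, which is handled precisely by the repulsive centrifugal term $a(t)^{2}/r^{3}$ together with the assumption that $a$ has no zeros, both encoded in the inequality $1/r^{2}\leq 2H/a^{2}$ used above.
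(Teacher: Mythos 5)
Your proof is correct and follows essentially the same route as the paper: both derive the differential inequality $|H'(t)|\leq 2|a'(t)/a(t)|\,H(t)$ for the radial energy, use it to bound $H$ (hence $r$ from above and below, and $r'$) on any bounded interval, and invoke the continuation/blow-up alternative. You merely spell out the Gronwall step and the compactness argument that the paper leaves implicit.
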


\begin{proof}
  Suppose $r(t)$ is a solution of the Hamiltonian equations
  (\ref{eq:r_nonlin_2order}). Put $p_r=r'$ and
  $H(t)=H_{\mathrm{rad}}(r(t),p_{r}(t),t)$. Then
  \[
  \left|\frac{\mbox{d}}{\mbox{d}t}H(t)\right|
  =\frac{|a(t)a'(t)|}{r(t)^{2}}\leq\left|\frac{2a'(t)}{a(t)}\right|H(t).
  \]
  From here one readily concludes that if $r(t)$ is defined on a
  bounded interval $M\subset\mathbb{R}$, then there exist constants
  $R_{1}$, $R_{2}$, $0<R_{1}\leq R_{2}<+\infty$, such that $R_{1}\leq
  r(t)\leq R_{2}$ for all $t\in M$. From the general theory of
  ordinary differential equations it follows that any solution $r(t)$
  of (\ref{eq:r_nonlin_2order}) can be continued to the whole real
  line.
\end{proof}

As a first step of the derivation of the transformation we introduce
the action-angle coordinates for a frozen time. Assume for a moment
that $a(t)=a$ is a constant and denote
\[
V(r)=\frac{a^{2}}{2r^{2}}+\frac{b^{2}r^{2}}{8}.
\]
Suppose a fixed energy level $E$ is greater than the minimal value
$V_{\textrm{min}}=b|a|/2$. Then the motion is constrained to a bounded
interval $[r_{-},r_{+}]$, and one has
\[
E-V(r)=\frac{b^{2}}{8r^{2}}\,(r_{+}^{\;2}-r^{2})(r^{2}-r_{-}^{\;2}).
\]
The action equals
\[
I(E)=\frac{1}{\pi}\int_{r_{-}}^{r_{+}}\sqrt{2(E-V(\rho))}\,
\mathrm{d}\rho=\frac{b}{8}\,(r_{+}-r_{-})^{2}.
\]
Hence
\begin{displaymath}
  r_{\pm}=\frac{2}{\sqrt{b}}
  \left(I+\frac{|a|}{2}\pm\sqrt{I(I+|a|)}\right)^{\!1/2}
  =\sqrt{\frac{2}{b}}\left(\sqrt{I+|a|}\pm\sqrt{I}\right).
\end{displaymath}

Using the generating function,
\[
S(r,I)=\int_{r_{-}}^{r}\sqrt{2(E-V(\rho))}\,\mathrm{d}\rho
=\frac{b}{2}\int_{r_{-}}^{r}\frac{1}{\rho}\,
\sqrt{(r_{+}^{\;2}-\rho^{2})(\rho^{2}-r_{-}^{\;2})}\,\mathrm{d}\rho,
\]
one derives the canonical transformation of variables between
$(r,p_{r})$ and the angle-action variables $(\varphi,I)$. One has
\begin{eqnarray*}
  \frac{\partial S}{\partial I}
  & = & \frac{b}{4}\int_{r_{-}}^{r}\frac{1}{\rho}
  \left(\sqrt{\frac{\rho^{2}-r_{-}^{\,2}}{r_{+}^{\,2}-\rho^{2}}}\,
    \frac{\mbox{d}r_{+}(I)^{2}}{\mbox{d}I}
    -\sqrt{\frac{r_{+}^{\,2}-\rho^{2}}{\rho^{2}-r_{-}^{\,2}}}\,
    \frac{\mbox{d}r_{-}(I)^{2}}{\mbox{d}I}\right)\mathrm{d}\rho\\
  & = & \frac{\pi}{2}-\arctan\!\left(\frac{r_{+}^{\,2}
      +r_{-}^{\,2}-2r^{2}}{2\sqrt{(r_{+}^{\,2}-r^{2})(r^{2}-r_{-}^{\,2})}}
  \right)\!.
\end{eqnarray*}
For the angle variable $\varphi=\partial S/\partial I-\pi/2$ one
obtains
\[
\sin(\varphi)=\frac{1}{\sqrt{I(I+|a|)}}
\left(\frac{br^{2}}{4}-I-\frac{|a|}{2}\right)\!.
\]
Furthermore,
\[
p_{r}=\frac{\partial S}{\partial r}
=\frac{b}{2r}\,\sqrt{(r_{+}^{\,2}-r^{2})(r^{2}-r_{-}^{\,2})}
=\frac{2}{r}\,\sqrt{I(I+|a|)}\,\cos(\varphi).
\]
Relations (\ref{eq:canonic_r}), (\ref{eq:canonic_pr}),
(\ref{eq:I_fce_polar}), (\ref{eq:varphi_fce_polar}) readily follow.

Let us now switch to the time-dependent case. The Hamiltonian
transforms according to the rule
\[
H_{c}(\varphi,I,t)=H_{\mathrm{rad}}\big(r(\varphi,I,t),
p_{r}(\varphi,I,t),t\big)
+\frac{\partial S(u,I,t)}{\partial t}\Bigg\vert_{u=r(\varphi,I,t)}.
\]
One computes
\begin{displaymath}
  \frac{\partial S(u,I,t)}{\partial t}\Bigg\vert_{u=r(\varphi,I,t)}
  = \frac{b\,|a|'}{4}\int_{r_{-}}^{r(\varphi,I,t)}
  \frac{1}{\rho}\left(\sqrt{
      \frac{\rho^{2}-r_{-}^{\,2}}{r_{+}^{\,2}-\rho^{2}}}\,
    \frac{\partial r_{+}^{\,2}}{\partial|a|}
    -\sqrt{\frac{r_{+}^{\,2}-\rho^{2}}{\rho^{2}-r_{-}^{\,2}}}\,
    \frac{\partial r_{-}^{\,2}}{\partial|a|}\right)\mathrm{d}\rho.
\end{displaymath}
Simplifying the expression and dropping those terms which are
independent of $\varphi$ and $I$ one finally arrives at the expression
(\ref{eq:H_c_def}).

\section{Details on the von Zeipel elimination method and the proofs}
\label{sec:vonZeipelMethod}

\subsection{A summary of basic formulas}
\label{sec:vonZeipel_Notation}

This is a short summary of basic steps in the Poincar\'e-von~Zeipel
elimination procedure. Recall the notation from the beginning of
Subsection~\ref{sec:results_averaged}.

Consider a completely integrable Hamiltonian in action-angle
coordinates, $K_{0}(I)=\omega\cdot I$, where $I$ runs over a domain in
$\mathbb{R}^{d}$, $\varphi\in\mathbb{T}^{d}$ and
$\omega\in\mathbb{R}_{+}^{\, d}$ is a constant vector of frequencies.
One is interested in a perturbed system with a small Hamiltonian
perturbation so that the total Hamiltonian reads
\[
K(\varphi,I,\epsilon)
=K_{0}(I)+\epsilon\, K_{*}(\varphi,I,\epsilon)
=K_{0}(I)+\epsilon\, K_{1}(\varphi,I)+\epsilon^{2}\,
K_{2}(\varphi,I)+\ldots
\]
where $\epsilon$ is a small parameter. The function
$K_{*}(\varphi,I,\epsilon)$ is assumed to be analytic in all
variables

Let $\mathbb{K}$ be the lattice of indices in $\mathbb{Z}^{d}$
corresponding to resonant frequencies and $\mathbb{K}^{c}$ be its
complement, i.e.
\begin{displaymath}
  \mathbb{K} = \{n\in\mathbb{Z}^{d};\,n\cdot\omega=0\},\mbox{~}
  \mathbb{K}^{c}=\mathbb{Z}^{d}\setminus\mathbb{K}.
\end{displaymath}
One applies a formal canonical transformation of variables,
$(I,\varphi)\mapsto(J,\psi)$, so that the Fourier series in the angle
variables $\psi$ of the resulting Hamiltonian
$\mathcal{K}(\psi,J,\epsilon)$ has nonzero coefficients only for
indices from the lattice $\mathbb{K}$. The canonical transformation is
generated by a function $S(\varphi,J,\epsilon)$ regarded as a formal
power series with coefficient functions $S_{j}(\varphi,J)$ and the
absolute term $S_{0}(\varphi,J)=\varphi\cdot J$. Similarly, the new
Hamiltonian $\mathcal{K}(\psi,J,\varepsilon)$ is sought in the form of
a formal power series with coefficient functions
$\mathcal{K}_{j}(\psi,J)$. One arrives at the system of equations
$\mathcal{K}_{0}(J,\varphi)=K_{0}(J)=\omega\cdot J$ and
\[
\mathcal{K}_{j}(\varphi,J)
=\omega\cdot\frac{\partial S_{j}(\varphi,J)}{\partial\varphi}
+P_{j}(\varphi,J),\ j\geq1,
\]
where $P_{1}(\varphi,J)=K_{1}(\varphi,J)$ and the terms $P_{j}$ for
$j\geq2$ are determined recursively. The formal von Zeipel Hamiltonian
is defined by the equalities
$\mathcal{K}_{j}(\psi,J)=\langle{}P_{j}(\psi,J)\rangle_{\mathbb{K}}$
for $j\geq1$. Coefficients $S_{j}(\varphi,J)$ are then solutions of
the first order differential equations
\[
\omega\cdot\frac{\partial S_{j}(\varphi,J)}{\partial\varphi}
=-\left\langle P_{j}(\varphi,J)\right\rangle _{\mathbb{K}^{c}},
\mbox{~}j\geq1.
\]

In practice one truncates $\mathcal{K}(\psi,J,\epsilon)$ at some order
$m\geq1$ of the parameter $\epsilon$. Let us define the $m$th order
averaged Hamiltonian
\[
\mathcal{K}_{(m)}(\psi,J,\epsilon)
=\mathcal{K}_{0}(J)+\epsilon\,\mathcal{K}_{1}(\psi,J)
+\ldots+\epsilon^{m}\mathcal{K}_{m}(\psi,J).
\]
Similarly, let $S_{(m)}(\varphi,J,\epsilon)$ be the truncated
generating function. If $(\psi(t),J(t))$ is a solution of the
Hamiltonian equations for $\mathcal{K}_{(m)}(\psi,J,\epsilon)$, and if
$(\varphi(t),I(t))$ is the same solution after the inverted canonical
transformation generated by $S_{(m)}(\varphi,J,\epsilon)$, then
$(\varphi(t),I(t))$ is expected to approximate well the solution of
the original system (governed by the Hamiltonian
$K(\varphi,I,\epsilon)$) for times of order $1/\epsilon^{m}$ (see
\cite{ArnoldKozlovNeishtadt} for a detailed discussion).

\subsection{Derivation of the first-order averaged Hamiltonian}
\label{ses:von_Zeipel_dynamics_1storder}

If the ratio $\omega_{2}/\omega_{1}$ is irrational then the lattice
$\mathbb{K}$ is trivial, $\mathbb{K}=\{0\}$, and the von Zeipel method
amounts to the ordinary averaging method in the angle variables
$\varphi$. Here we focus on the complementary case
(\ref{eq:w2_to_w1_rational}) when $\omega_{2}/\omega_{1}=\mu/\nu$,
with $\mu$ and $\nu$ being coprime positive integers.

Recalling (\ref{eq:K_extended}), (\ref{eq:F_vphi_I_eps}), we have
$K(\varphi,I,\epsilon)=\omega_{1}I_{1}+\omega_{2}I_{2}+\epsilon
K_{1}(\varphi,I)+\epsilon^{2}\widetilde{K}(\varphi,I,\epsilon)$ where
$\widetilde{K}(\varphi,I,\epsilon)$ is an analytic function in
$\epsilon$,
\[
K_{1}(\varphi,I)=\omega_{2}f'(\varphi_{2})
F_{1}(\varphi_{1},I_{1}),\
F_{1}(\varphi_{1},I_{1})
=\arctan\!\left(\frac{\sqrt{I_{1}}
    \cos(\varphi_{1})}{\sqrt{I_{1}+p_{\theta}}+\sqrt{I_{1}}
    \sin(\varphi_{1})}\right)\!.
\]
One finds that
\[
\sF[F_{1}(\varphi_{1},I_{1})]_{k}=\frac{i^{k-1}}{2k}
\left(\frac{I_{1}}{I_{1}+p_{\theta}}\right)^{\!|k|/2}
\textrm{~~for~}k\neq0,\textrm{~}\sF[F_{1}(\varphi_{1},I_{1})]_{0}=0.
\]
Obviously, the Fourier image of $K_{1}(\varphi,I)$ takes nonzero
values only for indices $(k,\ell)$ such that
$k\in\mathbb{Z}\setminus\{0\}$, $\ell\in\supp\sF[f]\setminus\{0\}$,
and
\[
\sF[K_{1}(\varphi,I)]_{(k,\ell)}
=i\ell\omega_{2}\,\sF[f(\varphi_{2})]_{\ell}\,
\sF[F_{1}(\varphi_{1},I_{1})]_{k}.
\]

Now we can describe the von Zeipel canonical transformation of the
first order. The resonant lattice is
$\mathbb{K}=\mathbb{Z}\,(\mu,-\nu)$, and, according to the general
scheme, one has
$\mathcal{K}_1(\psi,J)=\langle{}K_1(\psi,J)\rangle_{\mathbb{K}}$. With
the above computations, this immediately leads to
(\ref{eq:K1cal_psi_J}), (\ref{eq:def_beta}) and, consequently, to
(\ref{eq:K_trunc_1}).

An important piece of information is also the Poincar\'e-von~Zeipel
canonical transformation generated by $S_{(1)}(\varphi,J,\epsilon)$.
$S_{1}(\varphi,J)$ is a solution to the differential equation
$\omega\cdot\partial S_{1}/\partial\varphi=-K_{1}+\mathcal{K}_{1}$.
Seeking $S_{1}(\varphi,J)$ in the form
\begin{equation}
  S_{1}(\varphi,J)=2\Re\!\left(\,\sum_{k=1}^{\infty}
    \sF[f']_{k}G_{k}(\varphi_{1},J_{1})\, e^{ik\varphi_{2}}\right)
  \label{eq:S1eqG1Exp}
\end{equation}
one finally arrives at the countable system of equations
\[
\left(\frac{\partial}{\partial\varphi_{1}}
  +ik\lambda\right)\! G_{k}(\varphi_{1},J_{1})
= \lambda\sum_{n\in\mathbb{Z}\setminus\{0\},\textrm{~}
  n\neq-k\lambda}\frac{i^{n+1}}{2n}\,\beta(J_{1})^{|n|}\,
e^{in\varphi_{1}}\,,\ k \geq 1.
\]
For the solution we choose
\begin{equation}
  G_{k}(\varphi_{1},J_{1})
  =\lambda\sum_{n\in\mathbb{Z}\setminus\{0\},\textrm{~}
    n\neq-k\lambda}\frac{i^{n}}{2n(n+k\lambda)}\,\beta(J_{1})^{|n|}\,
  e^{in\varphi_{1}}.
  \label{eq:Gk}
\end{equation}
Of course, if $k\lambda\notin\mathbb{Z}$ then the restriction
$n\neq-k\lambda$ is void. On the other hand, if
$k\lambda\in\mathbb{Z}$, and this happens if and only if
$k\in\mathbb{Z}\nu$, then the solution $G_{k}(\varphi_{1},J_{1})$ is
not unique.

\subsection{Proofs of Theorem~\ref{thm:AsymptHamDynB1} and
  Proposition~\ref{thm:ConclusionResonant}}
\label{sec:1st_order_dynamics}

The purpose of this subsection is to give some details concerning the
dynamics generated by the Hamiltonian
$\mathcal{K}_{(1)}(\psi,J,\epsilon)$, as defined in
(\ref{eq:K_trunc_1}), and the proofs of
Theorem~\ref{thm:AsymptHamDynB1} and
Proposition~\ref{thm:ConclusionResonant}.

We start from the reduction to a two-dimensional Hamiltonian
subsystem. Since $\mu$ and $\nu$ are coprime there exist
$\tilde{\mu},\tilde{\nu}\in\mathbb{Z}$ such that
$\tilde{\mu}\mu+\tilde{\nu}\nu=1$. Put
\[
\mathbf{R}=\left(\begin{array}{cc}
\mu & -\nu\\
\tilde{\nu} & \tilde{\mu}\end{array}\right)
\]
and consider the canonical transformation $\chi=\mathbf{R}\psi$,
$J=\mathbf{R}^{\mathrm{T}}L$ generated by the function
$L\cdot\mathbf{R}\psi$. In particular,
\begin{displaymath}
  \chi_{1}=\mu\psi_{1}-\nu\psi_{2},~L_{2}=\nu J_{1}+\mu J_{2},~
  J_{1}=\mu L_{1}+\tilde{\nu}L_{2}.
\end{displaymath}
Since $\mathcal{K}_1(\psi,J)$ in (\ref{eq:K1cal_psi_J}) depends only
on the angle $\chi_1$, and not on $\chi_2$, the action $L_2$ is an
integral of motion. Let us define
\[
\mathcal{Z}(\chi_{1},J_{1})
=\varepsilon\mu\,\mathcal{K}_{1}(\mathbf{R}^{-1}\chi,J).
\]
Then
\begin{eqnarray*}
  \chi_{1}'(t) & = & \varepsilon\,
  \frac{\partial\mathcal{K}_{1}(\psi,J)}{\partial J_{1}}\,
  \frac{\partial J_{1}}{\partial L_{1}}\,
  =\,\frac{\partial\mathcal{Z}(\chi_{1},J_{1})}{\partial J_{1}}\,,\\
  J_{1}'(t) & = & -\varepsilon\,
  \frac{\partial\mathcal{K}_{1}(\psi,J)}{\partial\psi_{1}}\,
  = \,-\frac{1}{\mu}\,
  \frac{\partial\mathcal{Z}(\chi_{1},J_{1})}{\partial\chi_{1}}\,
  \frac{\partial\chi_{1}}{\partial\psi_{1}}\,
  = \,-\frac{\partial\mathcal{Z}(\chi_{1},J_{1})}{\partial\chi_{1}}\,.
\end{eqnarray*}
Thus the time evolution in coordinates $\chi_{1}$, $J_{1}$ is governed
by the Hamiltonian $\mathcal{Z}(\chi_{1},J_{1})$.

From the form of of the series (\ref{eq:K1cal_psi_J}) one can see that
the Hamiltonian $\mathcal{Z}(\chi_{1},J_{1})$ can be expressed, in a
compact way, in terms of a holomorphic function. Set
\begin{equation}
  h(z) = -\varepsilon\mu\omega_{1}\sum_{n=1}^{\infty}
  \sF[f]_{-n\nu}\, i^{n\mu}\, z^{n}
  \label{eq:defhz}
\end{equation}
and
\[
\varrho(x)
= \beta(x)^{\mu}
= \left(\frac{x}{x+p_{\theta}}\right)^{\!\mu/2}\!,\textrm{~~}x>0.
\]
Then, by assumption (\ref{eq:SumkFfk}), $h(z)$ is holomorphic on the
open unit disk $B_{1}\subset\mathbb{C}$ and
$h\in{}C^{1}(\overline{B_{1}})$. Moreover, one can write
$\mathcal{Z}(\chi_{1},J_{1})=\Re[h(\varrho(J_{1})e^{i\chi_{1}})]$
(note that $\sF[f]_{n\nu}=\overline{\sF[f]_{-n\nu}}$ since $f$ is
real). The Hamiltonian equations of motion read
\begin{equation}
  \chi_{1}'(t)=\frac{\varrho'(J_{1})}{\varrho(J_{1})}\,\Re[z\, h'(z)],
  \textrm{~}J_{1}'(t)
  =\Im[z\, h'(z)],\textrm{~with~}
  z=\varrho(J_{1})e^{i\chi_{1}}.
  \label{eq:HamEqChiJ}
\end{equation}
Concerning the asymptotic behavior of Hamiltonian trajectories
$(\chi_{1}(t),J_{1}(t))$, as $t\to+\infty$, one can formulate a
proposition under somewhat more general circumstances, as it is done
in Theorem~\ref{thm:AsymptHamDynB1}. \phantom{[}

\def\proof{\par{\it Proof of Theorem~\ref{thm:AsymptHamDynB1}}.
  \ignorespaces}
\begin{proof}
  Set $R(z)=\Re[h(z)]$, $z\in\overline{B_{1}}$. Then
  \begin{displaymath}
    \dd{}R_{z}\equiv(\Re[h'(z)],-\Im[h'(z)]).
  \end{displaymath}
  Hence $\dd{}R_{z}=0$ iff $h'(z)=0$, and the set of critical points
  of $R$ in $B_{1}$ has no accumulation points in $B_{1}$ and is at
  most countable. By Sard's theorem, almost all $y\in\mathbb{R}$ are
  regular values of $R|\partial B_{1}$.  If $y$ is a regular value
  both of $R$ and $R|\partial B_{1}$ then the level set $R^{-1}(y)$ is
  a compact one-dimensional $C^{1}$ submanifold with boundary in
  $\overline{B_{1}}$, $\partial R^{-1}(y)=R^{-1}(y)\cap\partial B_{1}$
  and $R^{-1}(y)$ is not tangent to $\partial B_{1}$ at any point.
  Moreover, $R^{-1}(y)\cap B_{1}$ is a smooth submanifold of $B_{1}$
  \cite{GuilleminPollack}. By the classification of compact connected
  one-dimensional manifolds \cite{GuilleminPollack}, every component
  of $R^{-1}(y)$ is diffeomorphic either to a circle or to a closed
  interval. But the first possibility is excluded because $R(z)$ is a
  harmonic function. In fact, if $\overline{U}\subset B_{1}$, $U$ is
  an open set, $\partial U\simeq S^{1}$ is a smooth submanifold of
  $B_{1}$ and $R(z)$ is constant on $\partial U$ then $R(z)$ is
  constant on $U$ and so is $h(z)$. Consequently, $h(z)$ is constant
  on $B_{1}$, a contradiction with our assumptions. Thus every
  component $\Gamma$ of $R^{-1}(y)$ is diffeomorphic to a closed
  interval, $\partial\Gamma=\{a,b\}=\Gamma\cap\partial B_{1}$, and
  $\Gamma$ is tangent to $\partial B_{1}$ neither at $a$ nor at $b$.
  Let $z\in B_{1}$ be such that $\dd R_{z}\neq0$. By the local
  submersion theorem \cite{GuilleminPollack}, $R$ is locally
  equivalent at $z$ to the canonical submersion
  $\mathbb{R}^{2}\to\mathbb{R}$. Hence $z$ possesses an open
  neighborhood $U$ such that $R(U)$ is an open interval. We know that
  almost every $y\in R(U)$ is a regular value both of $R$ and
  $R|\partial B_{1}$. By the Fubini theorem, for almost every
  $w\in{}U$, $R(w)$ is a regular value both of $R$ and
  $R|\partial{}B_{1}$. The same claim is true for almost all
  $w\in{}B_{1}$ because the set of critical points of $R$ in $B_{1}$
  is at most countable. It follows that for almost all
  $(\chi_{1},J_{1})\in\mathbb{R}\times\,]0,+\infty[\,$,
  $R(\varrho(J_{1})e^{i\chi_{1}})\neq R(0)$ is a regular value both of
  $R$ and $R|\partial B_{1}$.

  Suppose now that an initial condition $(\chi_{1}(0),J_{1}(0))$ has
  been chosen so that
  $y=R(\varrho(J_{1}(0))e^{i\chi_{1}(0)})\neq{}R(0)$ is a regular
  value both of $R$ and $R|\partial B_{1}$. Let $\Gamma$ be the
  component of $R^{-1}(y)$ containing the point
  $\varrho(J_{1}(0))e^{i\chi_{1}(0)}$.  Since the Hamiltonian
  $\mathcal{Z}(\chi_{1},J_{1})$ is an integral of motion the
  Hamiltonian trajectory $z(t)=\varrho(J_{1}(t))e^{i\chi_{1}(t)}$ is
  constrained to the submanifold $\Gamma\subset\overline{B_{1}}$. We
  have to show that $z(t)$ reaches the boundary $\partial B_{1}$ as
  $t\to+\infty$. The tangent vector to the trajectory at the point
  $z(t)$ equals
  \[
  \frac{\dd z(t)}{\dd t}
  = i\varrho(J_{1}(t))\varrho'(J_{1}(t))\,\overline{h'(z(t))}\,.
  \]
  Since $0\notin\Gamma$, $\varrho'(J_{1})>0$ for all $J_{1}>0$ and
  $h'(z)$ has no zeroes on $\Gamma$ (because $y$ is a regular value)
  it follows that $z(t)$ leaves any compact subset of $B_{1}$ in a
  finite time. It remains to show that $z(t)$ does not reach
  $\partial{}B_{1}$ in a finite time. But by equations of motion
  (\ref{eq:HamEqChiJ}),
  $|J_{1}'(t)|\leq\max_{z\in\partial{}B_{1}}|h'(z)|$ and so $J_{1}(t)$
  cannot grow faster than linearly.

  This reasoning shows (\ref{eq:limInfChiJ}). From
  (\ref{eq:HamEqChiJ}) and (\ref{eq:limInfChiJ}) it follows
  (\ref{eq:limInfDerChiJ}); one has only to justify the sign of the
  limit. Obviously, the limit must be nonnegative. Denote
  $\partial{}R=R|\partial B_{1}$. Then $\partial R$ can be regarded as
  a function of the angle variable, $\partial R(x)=\Re[h(e^{ix})]$,
  and one has
  \[
  (\partial R)'\big(\chi_{1}(\infty)\big)
  = -\Im\!\left[e^{i\chi_{1}(\infty)}\,
    h'\!\left(e^{i\chi_{1}(\infty)}\right)\right]\neq0
  \]
  because $y=\partial R(\chi_{1}(\infty))$ is a regular value of
  $\partial R$.
\end{proof}
\def\proof{\par{\it Proof}. \ignorespaces}

Finally, to derive Proposition~\ref{thm:ConclusionResonant} one has to
apply the inverted canonical transformation, from $(\psi,J)$ to
$(\varphi,I)$, generated by
$S_{(1)}(\varphi,J,\epsilon)=\varphi\cdot{}J+\epsilon{}S_{1}(\varphi,J)$.
Hence
\begin{equation}
  \label{eq:transf_Jpsi_Ivarphi}
  \psi = \varphi
  +\epsilon\,\frac{\partial S_{1}(\varphi,J)}{\partial J}\,,
  \textrm{~}I = J+\epsilon\,
  \frac{\partial S_{1}(\varphi,J)}{\partial\varphi}\,.
\end{equation}
For the proof we need a couple of auxiliary estimates.

\begin{lemma}
  \label{thm:SumBetaLeqLog}
  For all $\beta$, $0\leq\beta<1$, and all
  $a\in\mathbb{R}\setminus\mathbb{Z}$,
  \begin{equation}
    \sup_{j\in\mathbb{Z}}\,\sum_{n\in\mathbb{Z}}\,
    \frac{\beta^{|n+j|}}{|n-a|}
    \leq \frac{1}{\dist(a,\mathbb{Z})}+2+6|\log(1-\beta)|,
    \label{eq:SupjSumbetaaI}
  \end{equation}
  and for all $a\in\mathbb{Z}$,
  \begin{equation}
    \sup_{j\in\mathbb{Z}}\,\sum_{n\in\mathbb{Z},\textrm{~}
      n\neq a}\,\frac{\beta^{|n+j|}}{|n-a|}
    \leq 1+3|\log(1-\beta)|.
    \label{eq:SupjSumbetaaII}
  \end{equation}
\end{lemma}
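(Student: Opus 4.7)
My plan is to absorb the $j$-dependence via the substitution $k = n + j$, which rewrites both sums in the normalized form $\sum_{k\in\mathbb{Z}}\beta^{|k|}/|k-c|$, where $c = j + a$ (and the term $k = c$ is excluded in the case $a \in \mathbb{Z}$). As $j$ ranges over $\mathbb{Z}$, the parameter $c$ ranges over the coset $a + \mathbb{Z}$, so $\dist(c,\mathbb{Z}) = \dist(a,\mathbb{Z}) =: \delta$ is preserved, and $\sup_j$ becomes $\sup_c$ over this coset. This decouples the singular location $c$ from the geometric weight $\beta^{|k|}$ centered at the origin.

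For \eqref{eq:SupjSumbetaaI} I would write $c = c_0 + \tau$ with $c_0 \in \mathbb{Z}$ and $\tau \in (0,1)$, so that $\delta = \min(\tau,\,1 - \tau)$. The two ``near'' indices $k \in \{c_0, c_0 + 1\}$ contribute at most $1/\tau + 1/(1-\tau) \leq 1/\delta + 2$, which accounts for the first two constants in the advertised bound. The remaining terms split as $\sum_{m \geq 2} \beta^{|c_0 + m|}/(m - \tau)$ and $\sum_{m \geq 1} \beta^{|c_0 - m|}/(m + \tau)$; after using $m - \tau \geq m - 1$ and $m + \tau \geq m$, each is dominated by a one-sided sum of the generic form $S(\ell) := \sum_{m \geq 1}\beta^{|\ell + m|}/m$ for a suitable $\ell \in \mathbb{Z}$. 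In case \eqref{eq:SupjSumbetaaII}, the analogous shift $m = n - a$, $\ell = j + a \in \mathbb{Z}$ reduces the whole sum directly to $S(\ell) + S(-\ell)$.

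The crux is therefore the uniform bound $\sup_{\ell \in \mathbb{Z}} S(\ell) = O(1) + O(|\log(1-\beta)|)$. The easy case $\ell \geq 0$ gives $S(\ell) = \beta^{\ell}\sum_{m \geq 1} \beta^m/m = -\beta^{\ell}\log(1-\beta) \leq |\log(1-\beta)|$. The delicate case, which I expect to be the main obstacle, is $\ell = -L$ with $L \geq 1$, for which $\beta^{|m - L|}$ first grows with $m$ and only then decays; here the naive harmonic estimate $\sum_{m=1}^{2L} 1/m$ produces a non-uniform $\log L$. My approach is to split the range at the natural scale $N = \lceil 1/(1-\beta)\rceil$: on the ``bulk'' window $|m - L| < N$ I bound $\beta^{|m-L|} \leq 1$ and estimate the remaining harmonic sum by $1 + \log(2N) \leq C + |\log(1-\beta)|$, while outside this window $\beta^{|m - L|} \leq \beta^{N}$ combined with the geometric mass $\sum_k \beta^k \leq 1/(1-\beta)$ yields $\beta^{N}/(1-\beta) = O(1)$. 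The tail $\sum_{m > L} \beta^{m - L}/m \leq \sum_{k \geq 1}\beta^k/k = |\log(1-\beta)|$ is immediate. Adding these pieces yields the required uniform estimate for $S(\ell)$, and the specific constants $1,2,3,6$ in \eqref{eq:SupjSumbetaaI}–\eqref{eq:SupjSumbetaaII} then fall out by optimizing the splitting threshold and careful bookkeeping.
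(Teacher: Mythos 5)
Your reduction steps are sound: the shift $k=n+j$, the peeling off of the two near indices (which gives $1/\tau+1/(1-\tau)\le 1/\delta+2$ since one of $\tau,1-\tau$ is at least $1/2$), and the reduction of everything to a uniform bound on $S(\ell)=\sum_{m\ge1}\beta^{|\ell+m|}/m$ all work, and the case $\ell\ge0$ is handled correctly. The gap sits exactly in the case you yourself flagged as the main obstacle, $\ell=-L$ with $L\ge1$: with the threshold $N=\lceil 1/(1-\beta)\rceil$ your tail estimate fails. Outside the window you bound the contribution by $\beta^{N}\sum_k\beta^k\le\beta^{N}/(1-\beta)$ and declare this $O(1)$; but $\beta^{N}=(1-(1-\beta))^{\lceil 1/(1-\beta)\rceil}\to e^{-1}$ as $\beta\to1^-$, so $\beta^{N}/(1-\beta)\sim e^{-1}/(1-\beta)$ diverges. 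The inequality you wrote down is true, but the bound it yields is of order $1/(1-\beta)$, far worse than the target $O(|\log(1-\beta)|)$. The culprit is the left tail $1\le m\le L-N$, where $1/m$ is not small and you have thrown it away.

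The step is repairable, but with a different split. The paper cuts the dangerous sum $\sum_{m=1}^{L}\beta^{L-m}/m$ at $m=L/2$ rather than at a $\beta$-dependent scale: for $m\le L/2$ one has $L-m\ge m$, hence $\beta^{L-m}\le\beta^{m}$ and this piece is at most $\sum_{m\ge1}\beta^{m}/m=|\log(1-\beta)|$; for $L/2<m\le L$ one has $\beta^{L-m}\le1$ and $\sum_{L/2<m\le L}1/m\le1$. Combined with the right tail $\sum_{m>L}\beta^{m-L}/m\le|\log(1-\beta)|$ this gives $S(-L)\le 1+2|\log(1-\beta)|$ uniformly, from which the stated constants $1+3|\log(1-\beta)|$ and $\frac{1}{\dist(a,\mathbb{Z})}+2+6|\log(1-\beta)|$ follow (the paper gets the first bound from the reduction to $a=0$ and the second from $|n-a|\ge|n|/2$ after normalizing $0<a\le1/2$). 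Alternatively, your window idea survives if you enlarge the threshold to $N\sim|\log(1-\beta)|/(1-\beta)$, so that $\beta^{N}\le 1-\beta$, at the cost of a harmless $\log\log$ term in the bulk harmonic sum, or if you retain the factor $1/m$ in the left tail instead of bounding it by $1$. Finally, the specific constants $2,6$ and $1,3$ are asserted rather than derived in your write-up; they do drop out of the $L/2$ split, but not obviously out of yours.
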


\begin{proof}
  Notice that inequality (\ref{eq:SupjSumbetaaI}) is invariant if $a$
  is replaced either by $-a$ or by $k+a$, $k\in\mathbb{Z}$. Thus we
  can restrict ourselves to the interval $0<a\leq1/2$. Then
  $|a|=\dist(a,\mathbb{Z})$ and $|n-a|\geq|n|/2$.  This observation
  reduces (\ref{eq:SupjSumbetaaI}) to (\ref{eq:SupjSumbetaaII}) with
  $a=0$. Similarly, inequality (\ref{eq:SupjSumbetaaII}) is invariant
  if $a$ is replaced by $k+a$, $k\in\mathbb{Z}$. It follows that, in
  both cases, it suffices to show (\ref{eq:SupjSumbetaaII}) for $a=0$
  and with the range of $j$ restricted to nonnegative integers.

  Suppose that $j\geq0$. Splitting the range of summation in $n$ into
  the subranges $n\leq-j-1$, $-j\leq n\leq-1$ and $1\leq n$, one gets
  \[
  \sum_{n\in\mathbb{Z},\textrm{~}n\neq0}\,\frac{\beta^{|n+j|}}{|n|}
  \leq2|\log(1-\beta)|+\sum_{m=1}^{j}\frac{\beta^{j-m}}{m}\,.
  \]
  Furthermore,
  \[
  \sum_{1\leq m\leq j/2}\frac{\beta^{j-m}}{m}
  \leq\sum_{1\leq m\leq j/2}\frac{\beta^{m}}{m}\leq|\log(1-\beta)|
  \]
  and
  \[
  \sum_{j/2<m\leq j}\frac{\beta^{j-m}}{m}
  \leq\sum_{j/2<m\leq j}\frac{1}{m}\leq1.
  \]
  This shows the lemma.
\end{proof}

\def\proof{\par{\it Proof of
    Proposition~\ref{thm:ConclusionResonant}}.  \ignorespaces}
\begin{proof}
  If (\ref{eq:ResonantCase}) is true then $h(z)$ defined in
  (\ref{eq:defhz}) obeys the assumptions of
  Theorem~\ref{thm:AsymptHamDynB1} and so for almost all initial
  conditions $(\chi_{1}(0),J_{1}(0))$, equalities
  (\ref{eq:limInfChiJ}) and (\ref{eq:limInfDerChiJ}) hold. In
  particular, Theorem~\ref{thm:AsymptHamDynB1} implies that
  \begin{equation}
    1-\beta(J_{1}(t)) = O(t^{-1})\textrm{~~as~}t\to+\infty.
    \label{eq:AsymptBeta}
  \end{equation}
  Putting $f_{\nu}(\varphi)=\langle f(\varphi)\rangle_{\mathbb{Z}\nu}$
  one also has
  \begin{equation}
    \lim_{t\to+\infty}\frac{J_{1}(t)}{t}
    = \Im\!\left[e^{i\chi_{1}(\infty)}\,
      h'\!\left(e^{i\chi_{1}(\infty)}\right)\right]
    = -\frac{\varepsilon\omega_{2}}{2}\,
    f_{\nu}'\!\left(-\frac{\chi_{1}(\infty)}{\nu}
      -\frac{\pi\lambda}{2}\right) > 0.
    \label{eq:LimJ1Overt}
  \end{equation}

  Further, using (\ref{eq:Gk}) one can estimate (recalling that
  $\lambda=\mu/\nu$)
  \[
  \left|\frac{\partial G_{k}(\varphi_{1},J_{1})}{\partial J_{1}}\right|
  \leq \frac{\lambda}{2}\,\beta'(J_{1})
  \sum_{n\in\mathbb{Z}\setminus\{0\},\textrm{~}
    n \neq -k\lambda}\frac{1}{|n+k\lambda|}\,\beta(J_{1})^{|n|-1}
  \leq \frac{\mu\beta'(J_{1})}{1-\beta(J_{1})}\,.
  \]
  Hence, using (\ref{eq:def_beta}),
  \begin{equation}
    \left|\frac{\partial G_{k}(\varphi_{1},J_{1})}
      {\partial J_{1}}\right|
    \leq \frac{\mu}{2p_\theta}\,\frac{1-\beta(J_{1})}{\beta(J_{1})}\,.
    \label{eq:EstimDerGkJ1}
  \end{equation}

  As a next step let us estimate
  $|\partial{}G_{k}/\partial\varphi_{1}|$. Clearly,
  \[
  \left|\frac{\partial G_{k}(\varphi_{1},J_{1})}
    {\partial\varphi_{1}}\right|
  \leq\frac{\lambda}{2}\,\sum_{n\in\mathbb{Z}\setminus\{0\},\textrm{~}
    n\neq -k\lambda}\frac{1}{|n+k\lambda|}\,\beta(J_{1})^{|n|}\,.
  \]
  Writing $k\lambda=-j-a$, with $j\in\mathbb{Z}$ and $a=s/\nu$,
  $s=0,1,\ldots,\nu-1$, one can apply Lemma~\ref{thm:SumBetaLeqLog} to
  show that
  \begin{equation}
    \left|\frac{\partial G_{k}(\varphi_{1},J_{1})}
      {\partial\varphi_{1}}\right|
    \leq c'+c''|\log\!\left(1-\beta(J_{1})\right)|
    \label{eq:EstimDerGkVarphi1}
  \end{equation}
  where the constants $c'$, $c''$ do not depend on $k$ and
  $\varphi_{1}$, $J_{1}$.

  From definitions (\ref{eq:S1eqG1Exp}), (\ref{eq:Gk}) and from
  assumption (\ref{eq:SumkFfk}) one can readily see that
  $S(\varphi,J)$ is $C^{1}$ in $\varphi_{2}$ and $C^{\infty}$ in
  $\varphi_{1}$, $J_{1}$ (and does not depend on $J_{2}$). Moreover,
  from (\ref{eq:EstimDerGkJ1}) and (\ref{eq:AsymptBeta}) it follows
  that
  \begin{equation}
    \frac{\partial S_{1}(\varphi(t),J(t))}{\partial J_{1}}
    = O(t^{-1})\textrm{~~as~}t\to+\infty.
    \label{eq:der_S1_J1}
  \end{equation}
  Similarly, estimate (\ref{eq:EstimDerGkVarphi1}) implies
  \begin{equation}
    \frac{\partial S_{1}(\varphi(t),J(t))}{\partial\varphi_{1}}
    = O(\log(t))\textrm{~~as~}t\to+\infty.
    \label{eq:der_S1_varphi1}
  \end{equation}

  Now one can deduce the asymptotic behavior of $\varphi_{1}(t)$ and
  $I_{1}(t)$. Observe that $\psi_{2}'=\omega_{2}$ and
  $S_{1}(\varphi,J)$ does not depend on $J_{2}$, hence
  $\psi_{2}(t)=\varphi_{2}(t)=\omega_{2}t$. Furthermore,
  $\psi_{1}=(\chi_{1}+\nu\psi_{2})/\mu$ and so
  \begin{equation}
    \label{eq:lim_psi1}
    \lim_{t\to+\infty}\left(\psi_{1}(t)-\omega_{1}t\right)
    = \frac{1}{\mu}\,\chi_{1}(\infty).
  \end{equation}
  Put $\phi(\infty)=\chi_{1}(\infty)/\mu$. To conclude the proof it
  suffices to recall the transformation rules
  (\ref{eq:transf_Jpsi_Ivarphi}) and to take into account
  (\ref{eq:LimJ1Overt}) jointly with (\ref{eq:der_S1_varphi1}) and
  (\ref{eq:lim_psi1}) jointly with (\ref{eq:der_S1_J1}).
\end{proof}
\def\proof{\par{\it Proof}. \ignorespaces}

\section{Proofs of Proposition~\ref{thm:F} and
  Proposition~\ref{thm:asym_phit_F_lin}}
\label{sec:SimplifiedEquation}



Analyzing equation (\ref{eq:ODEsimplifiedF}) we prefer to work with a
rescaled time (or one can choose the units so that $b=1$) and,
simplifying the notation, we consider a differential equation of the
form
\begin{equation}
  g'(t) = \frac{\varrho(t)}{g(t)+\sqrt{g(t)^{2}-a(t)^{2}}\sin(t+\phi)}
  \label{eq:ODEgtvarphi}
\end{equation}
where $\varrho(t)$, $a(t)$ are continuously differentiable real
functions, $a(t)$ is strictly positive and $\phi$ is a real constant.
In the resonant case the functions $\varrho(t)$, $a(t)$ are supposed
to be $2\pi$-periodic which means for the original data that
$\Omega\in\mathbb{N}$ (and $b=1$).

In the first step we estimate the growth of a solution on an interval
of length $\pi/2$. Let $\|f\|=\max|f(t)|$ denote the norm in
$C([0,\pi/2])$, and put
\[
A=\underset{0\leq t\leq\pi/2}{\min}\, a(t)>0.
\]
Consider for a moment the differential equation
\begin{equation}
  h'(t)=\frac{\varrho(t)}{h(t)-\sqrt{h(t)^{2}-a(t)^{2}}\,\cos(t)}
  \label{eq:ODEhtOn0Pihalf}
\end{equation}
on the interval $[0,\pi/2]$ with an initial condition
$h(0)=h_{0}>\|a\|$.  The goal is to show that for large values of
$h_{0}$, an essential contribution to the growth of a solution $h(t)$
on this interval comes from a narrow neighborhood of the point $t=0$.

\begin{remark}
  \label{thm:RemExistUniquephi}
  If $\varrho(t)$ is nonnegative on the interval $[0,\pi/2]$ then a
  solution $h(t)$ to (\ref{eq:ODEhtOn0Pihalf}) surely exists and is
  unique. In the general case, the existence and uniqueness is
  guaranteed provided the initial condition $h_{0}$ is sufficiently
  large. From (\ref{eq:ODEhtOn0Pihalf}) one derives that
  $|h'(t)|\leq2\|\varrho\|h(t)/A^{2}$ and so
  \begin{equation}
    \exp\!\left(-2\|\varrho\|t/A^{2}\right)h_{0}
    \leq h(t)\leq\exp\!\left(2\|\varrho\|t/A^{2}\right)h_{0}
    \label{eq:EstimhtByh0}
  \end{equation}
  as long as $h(t)$ makes sense. Consequently, a sufficient condition
  for existence of a solution is
  $h_{0}>\exp(\pi\|\varrho\|/A^{2})\|a\|$.
\end{remark}

\begin{lemma}
  \label{thm:Preparatoryht}
  Let $\varrho,a\in C^{1}([0,\pi/2])$ be real functions,
  $\varrho(0)\neq0$ and $a(t)>0$ on $[0,\pi/2]$. Consider the set of
  solutions $h(t)$ to the differential equation
  (\ref{eq:ODEhtOn0Pihalf}) on the interval $[0,\pi/2]$ with a
  variable initial condition $h(0)=h_{0}$ for $h_{0}$ sufficiently
  large. Then
  \[
  h\!\left(\frac{\pi}{2}\right)
  = h_{0}+\frac{\pi\varrho(0)}{a(0)}
  +O\!\left(h_{0}^{\,-1}\log(h_{0})\right)\textrm{~~as~}h_{0}\to+\infty.
  \]
\end{lemma}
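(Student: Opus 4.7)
The plan is to split the integral $h(\pi/2)-h_0 = \int_0^{\pi/2} h'(t)\,\mathrm{d}t$ at some fixed $\delta\in(0,\pi/2)$ independent of $h_0$ and to show that virtually all of the change in $h$ comes from the short initial segment $[0,\delta]$, where the denominator $D(t,h) := h - \sqrt{h^2-a^2}\cos t$ is small (of order $1/h$). On $[\delta,\pi/2]$ one has the elementary bound $D(t,h(t)) \geq h(t)(1-\cos t) \geq c(\delta)\,h_0$, using the coarse estimate $h(t)=\Theta(h_0)$ from Remark~\ref{thm:RemExistUniquephi}, and hence $\int_\delta^{\pi/2} h'(t)\,\mathrm{d}t = O(h_0^{-1})$.

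For the decisive contribution on $[0,\delta]$ I would Taylor expand everything around $t=0$: $\cos t = 1 - t^2/2 + O(t^4)$, $a(t) = a(0)+O(t)$, $\varrho(t) = \varrho(0)+\varrho'(0)t+O(t^2)$, and $\sqrt{h_0^2-a(0)^2} = h_0 - a(0)^2/(2h_0)+O(h_0^{-3})$. This suggests the approximation $D_*(t) := h_0 t^2/2 + a(0)^2/(2h_0)$, which is explicitly integrable:
\[
  \int_0^{\delta} \frac{\varrho(0)}{D_*(t)}\,\mathrm{d}t
  = \frac{2\varrho(0)}{a(0)}\arctan\!\left(\frac{h_0\delta}{a(0)}\right)
  = \frac{\pi\varrho(0)}{a(0)} + O(h_0^{-1}).
\]
The logarithmic remainder is produced entirely by the linear Taylor coefficient of $\varrho$:
\[
  \int_0^{\delta} \frac{2h_0\,\varrho'(0)\,t}{h_0^2 t^2 + a(0)^2}\,\mathrm{d}t
  = \frac{\varrho'(0)}{h_0}\log\!\left(1 + \frac{h_0^2\delta^2}{a(0)^2}\right)
  = O\!\left(\frac{\log h_0}{h_0}\right)\!.
\]
All other contributions---the cosine error of order $h_0 t^4$, the linear-$a$ perturbation $2a(0)a'(0)t/(2h_0)$ of the denominator, and the quadratic Taylor remainders---are individually $O(h_0^{-1})$, which one checks by splitting the $t$-integral at $t\sim a(0)/h_0$ and exploiting $D_*(t)\geq\max\!\left(h_0 t^2/2,\,a(0)^2/(2h_0)\right)$ on the two subintervals.

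The one nonroutine point is justifying the replacement of $h(t)$ by $h_0$ inside $D$, which requires the \emph{uniform} bound $|h(t)-h_0|\leq M$ for some fixed constant $M$, rather than merely $h(t)=\Theta(h_0)$. I would obtain this by a short bootstrap: since $\partial D/\partial h = (1-\cos t) - (\cos t)\,a^2/(2h^2) + O(a^4/h^4) = O(a^2/h_0^2 + t^2)$ uniformly for $h$ near $h_0$, the perturbation induced in the integrand by a deviation $|h(s)-h_0|\leq M$ integrates again to $O(h_0^{-1})$ by the same split as above. A standard continuity argument---starting from $h(0)=h_0$ and running the preceding estimates as long as $|h(s)-h_0|\leq M$---then yields $|h(t)-h_0|\leq \pi|\varrho(0)|/a(0)+1$ throughout $[0,\pi/2]$ once $h_0$ is large enough, closing the loop. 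I expect this bookkeeping, together with the nonlinear dependence of $D$ on $h$ near $t=0$, to be the main technical obstacle; the logarithm in the final error arises solely from $\varrho'(0)$, all other sources being genuinely $O(h_0^{-1})$.
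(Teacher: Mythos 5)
Your proof is correct and follows the same overall strategy as the paper's: isolate the contribution from a neighborhood of $t=0$ (where the denominator is of order $h_0^{-1}$), freeze $h$ at $h_0$ and the coefficients at their values at $0$, and evaluate the resulting integral explicitly via an arctangent, with the logarithm in the error coming from the linear term of the coefficient in the numerator. The technical execution differs at two points. First, where you Taylor-expand the denominator into the model form $h_0t^2/2+a(0)^2/(2h_0)$ and estimate each perturbation separately by splitting the $t$-integral at $t\sim a(0)/h_0$, the paper keeps the exact kernel $\Psi(x,u,t)=(x-\sqrt{x^2-u^2}\cos t)^{-1}$ and uses the logarithmic-derivative bounds $|\partial_x\Psi|\leq(2/x)\Psi$, $|\partial_u\Psi|\leq(3/u)\Psi$ together with the closed-form identity $\int_0^{\pi/2}\Psi(x,u,t)\,\mathrm{d}t=(2/u)\arctan((x+\sqrt{x^2-u^2})/u)\leq\pi/u$; this makes the bookkeeping shorter and avoids matching powers of $t$ against powers of $h_0$. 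Second, for the replacement of $h(t)$ by $h_0$ — which you rightly identify as the delicate step and resolve by a continuity/bootstrap argument yielding $|h(t)-h_0|=O(1)$ — the paper instead chooses the cut $\eta$ so that $\varrho$ has constant sign on $[0,\eta[$, making $h$ monotone there; the difference $\Psi(h(t),\cdot)-\Psi(h_0,\cdot)$ is then written as an integral of $\partial_x\Psi$ over $[h_1,h_2]$ with $h_2-h_1=|\Delta|$, giving the self-improving inequality $|\Delta-\int_0^\eta\varrho\,\Psi(h_0,a(t),t)\,\mathrm{d}t|\leq(2\pi\|\varrho\|/(Ah_1))|\Delta|$ and hence $\Delta=(1+O(h_0^{-1}))\int_0^\eta\varrho\,\Psi(h_0,a(t),t)\,\mathrm{d}t$ with no bootstrap. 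Your route buys independence from the sign condition on $\varrho$ near $0$ and a slightly sharper accounting (you show the $a'$-perturbation contributes only $O(h_0^{-1})$, so the logarithm is produced solely by $\varrho'(0)$, whereas the paper lumps both into the $O(h_0^{-1}\log h_0)$ term); the paper's route is more economical. Both are valid; make sure in the bootstrap that the a priori constant $M$ is taken as a uniform bound on $\sup_t|\int_0^t\varrho/D(s,h_0)\,\mathrm{d}s|$ (roughly $\pi\|\varrho\|/\min a$) rather than the a posteriori value $\pi|\varrho(0)|/a(0)$, since $\varrho$ may change sign on $[0,\delta]$.
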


\begin{proof}
  Let us fix $\eta$, $0<\eta\leq\pi/2$, so that $|\varrho(t)|>0$ on
  the interval $[0,\eta\,[\,$, i.e. $\varrho(t)$ does not change its
  sign on that interval. Thus any solution $h(t)$ to
  (\ref{eq:ODEhtOn0Pihalf}) is strictly monotone on $[0,\eta]$. For
  $\eta\leq t\leq\pi/2$ one can estimate $|h'(t)|\leq C/h(t)$ where
  $C=\|\varrho\|/(1-\cos(\eta))$.  In view of (\ref{eq:EstimhtByh0})
  it follows that
  \begin{equation}
    h(\pi/2)-h(\eta)=O\left(h_{0}^{\;-1}\right)\textrm{~as~}
    h_{0}\to+\infty.\label{eq:hPihalfMinusheta}
  \end{equation}

  Set
  \[
  h_{1}=\min\{h(0),h(\eta)\},\mbox{~}h_{2}
  =\max\{h(0),h(\eta)\},\mbox{~}\Delta=h(\eta)-h_{0}.
  \]
  Then $|\Delta|=h_{2}-h_{1}$. Set, for $x\geq a>0$,
  \[
  \Psi(x,a,t)=\frac{1}{x-\sqrt{x^{2}-a^{2}}\,\cos(t)}\,.
  \]
  One has $h'(t)=\varrho(t)\,\Psi(h(t),a(t),t)$. If
  $x\geq2u/\sqrt{3}\,>0$ then $\sqrt{x^{2}-u^{2}}\geq x/2$ and
  \begin{eqnarray}
    \left|\frac{\partial}{\partial x}\Psi(x,u,t)\right|
    & = & \frac{\left|\sqrt{x^{2}-u^{2}}-x\cos(t)\right|}
    {\sqrt{x^{2}-u^{2}}\left(x-\sqrt{x^{2}-u^{2}}\cos(t)\right)^{\!2}}\,
    \leq\,\frac{2}{x}\,\Psi(x,u,t),\label{eq:derx_Psi}\\
    \noalign{\medskip}
    \left|\frac{\partial}{\partial u}\Psi(x,u,t)\right|
    & = & \frac{u\cos(t)}{\sqrt{x^{2}-u^{2}}
      \left(x-\sqrt{x^{2}-u^{2}}\cos(t)\right)^{\!2}}\,
    \leq\,\frac{3}{u}\,\Psi(x,u,t).
    \label{eq:deru_Psi}
  \end{eqnarray}
  Observe also that, for $x\geq u>0$,
  \begin{equation}
    \int_{0}^{\pi/2}\Psi(x,u,t)\,\mbox{d}t
    =\frac{2}{u}\,\arctan\!\left(\frac{x+\sqrt{x^{2}-u^{2}}}{u}\right)
    \leq\frac{\pi}{u}\,.
    \label{eq:int_Psi_estim}
  \end{equation}

  Assuming that $h_{0}$ is sufficiently large and using
  (\ref{eq:derx_Psi}), (\ref{eq:int_Psi_estim}) one can estimate
  \begin{eqnarray*}
    \left|\Delta-\int_{0}^{\eta}\varrho(t)\Psi(h_{0},a(t),t)\,
      \mbox{d}t\right| & \leq & \int_{0}^{\eta}|\varrho(t)|
    \left|\Psi(h(t),a(t),t)-\Psi(h_{0},a(t),t)\right|\mbox{d}t\\
    & \leq & \frac{2\|\varrho\|}{h_{1}}\int_{0}^{\pi/2}
    \left(\int_{h_{1}}^{h_{2}}\Psi(x,A,t)\,\mbox{d}x\right)\mbox{d}t\\
    & \leq & \frac{2\pi\|\varrho\|}{Ah_{1}}\,|\Delta|.
  \end{eqnarray*}
  In view of (\ref{eq:EstimhtByh0}) it follows that
  \[
  \Delta=\left(1+O(h_{0}^{\,-1})\right)\int_{0}^{\eta}
  \varrho(t)\Psi(h_{0},a(t),t)\,\mbox{d}t.
  \]
  Furthermore, with the aid of (\ref{eq:deru_Psi}) one finds that
  \[
  \left|\varrho(t)\Psi(h_{0},a(t),t)-\varrho(0)\Psi(h_{0},a(0),t)\right|
  \leq C'\Psi(h_{0},A,t)t
  \]
  where
  \[
  C'=\left(1+\frac{9\|\varrho\|^{2}}{A^{2}}\right)^{\!1/2}
  \sqrt{\|\varrho'\|^{2}+\|a'\|^{2}}\,.
  \]
  Note that
  \[
  \int_{0}^{\eta}\Psi(h_{0},A,t)\, t\,\mbox{d}t
  \leq\frac{\pi}{2}\int_{0}^{\pi/2}\frac{\sin(t)}{h_{0}
    -\sqrt{h_{0}^{\,2}-A^{2}}\cos(t)}\,\mbox{d}t
  = O\!\left(h_{0}^{\,-1}\log(h_{0})\right)
  \]
  and
  \begin{eqnarray*}
    \int_{0}^{\eta}\varrho(0)\Psi(h_{0},a(0),t)\,\mbox{d}t
    & = & \varrho(0)\int_{0}^{\pi/2}\Psi(h_{0},a(0),t)\,
    \mbox{d}t+O\!\left(h_{0}^{\,-1}\right)\\
    & = & \frac{2\varrho(0)}{a(0)}\,
    \arctan\!\left(\frac{h_{0}+\sqrt{h_{0}^{\,2}-a(0)^{2}}}{a(0)}\right)
    +O\!\left(h_{0}^{\,-1}\right)\\
    & = & \frac{\pi\varrho(0)}{a(0)}+O\!\left(h_{0}^{\,-1}\right).
  \end{eqnarray*}
  Altogether this means that
  \[
  h(\eta)-h_{0}=\Delta=\frac{\pi\varrho(0)}{a(0)}
  +O\!\left(h_{0}^{\,-1}\log(h_{0})\right).
  \]
  Recalling (\ref{eq:hPihalfMinusheta}), the lemma follows.
\end{proof}

Consider the mapping $\mathcal{H}:h(0)\mapsto h(\pi/2)$, where $h(t)$
runs over solutions to the differential equation
(\ref{eq:ODEhtOn0Pihalf}).  From the general theory of ordinary
differential equations it is known that $\mathcal{H}$ is a $C^{1}$
mapping well defined on a neighborhood of $+\infty$.
Lemma~\ref{thm:Preparatoryht} claims that
$\mathcal{H}(x)=x+\pi\varrho(0)/a(0)+O(x^{-1}\log(x))$. On the basis
of similar arguments, the inverse mapping
$\mathcal{H}^{-1}:h(\pi/2)\mapsto{}h(0)$ is also well defined and
$C^{1}$ on a neighborhood of $+\infty$. From the asymptotic behavior
of $\mathcal{H}(x)$ one readily derives that
$\mathcal{H}^{-1}(y)=y-\pi\varrho(0)/a(0)+O(y^{-1}\log(y))$. These
considerations make it possible to reverse the roles of the boundary
points $0$ and $\pi/2$. Moreover, splitting the interval $[0,2\pi]$
into four subintervals of length $\pi/2$ one arrives at the following
lemma.

\begin{lemma}
  \label{thm:Preparatoryht02Pi}
  Let $\varrho,a\in C^{1}([0,2\pi])$ be real functions,
  $\varrho(\pi)\neq0$ and $a(t)>0$ on $[0,2\pi]$.  Consider the set of
  solutions $h(t)$ to the differential equation
  \[
  h'(t) = \frac{\varrho(t)}{h(t)+\sqrt{h(t)^{2}-a(t)^{2}}\,\cos(t)}
  \]
  on the interval $[0,2\pi]$ with a variable initial condition
  $h(0)=h_{0}$ for $h_{0}$ sufficiently large. Then
  \[
  h(2\pi) = h_{0}+\frac{2\pi\varrho(\pi)}{a(\pi)}
  +O\!\left(h_{0}^{\,-1}\log(h_{0})\right)\text{~~as~}h_{0}\to+\infty.
  \]
\end{lemma}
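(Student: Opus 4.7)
\medskip
\noindent\emph{Proof plan.} The crucial observation is that the new equation differs from the one in Lemma~\ref{thm:Preparatoryht} only by the sign in front of $\sqrt{h^{2}-a^{2}}\cos(t)$, and this sign is absorbed by the translation $s=t-\pi$. Setting $\tilde h(s)=h(s+\pi)$, $\tilde\varrho(s)=\varrho(s+\pi)$, $\tilde a(s)=a(s+\pi)$ and using $\cos(s+\pi)=-\cos(s)$, one obtains
\begin{equation*}
  \tilde h'(s)
  =\frac{\tilde\varrho(s)}{\tilde h(s)-\sqrt{\tilde h(s)^{2}-\tilde a(s)^{2}}\,\cos(s)},
  \qquad s\in[-\pi,\pi],
\end{equation*}
which is exactly the form (\ref{eq:ODEhtOn0Pihalf}). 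The target quantity is $\tilde h(\pi)-\tilde h(-\pi)$, and the coefficient $\tilde\varrho(0)=\varrho(\pi)\neq 0$ is nonzero, so one can exploit the dominant contribution near $s=0$ (where $\cos(s)=1$ makes the denominator small).

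The plan is then to split $[-\pi,\pi]$ into four consecutive subintervals of length $\pi/2$ and treat each separately. On $[0,\pi/2]$ I would apply Lemma~\ref{thm:Preparatoryht} directly, obtaining
$$
  \tilde h(\pi/2)=\tilde h(0)+\pi\varrho(\pi)/a(\pi)+O(h_{0}^{\,-1}\log h_{0}).
$$
On $[-\pi/2,0]$ the dominant contribution again lies at $s=0$, so I would reduce to Lemma~\ref{thm:Preparatoryht} by the time-reversal $u=-s$, $H(u)=\tilde h(-u)$; the resulting equation has the same structure with coefficients $-\tilde\varrho(-u)$ and $\tilde a(-u)$, and applying the lemma gives $H(\pi/2)=H(0)-\pi\varrho(\pi)/a(\pi)+O(h_{0}^{\,-1}\log h_{0})$, equivalently $\tilde h(0)=\tilde h(-\pi/2)+\pi\varrho(\pi)/a(\pi)+O(h_{0}^{\,-1}\log h_{0})$. (This is just the inverse mapping $\mathcal{H}^{-1}$ discussed in the paragraph preceding the lemma.)

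On the two outer intervals $[-\pi,-\pi/2]$ and $[\pi/2,\pi]$ one has $\cos(s)\le 0$, so the denominator satisfies $\tilde h-\sqrt{\tilde h^{2}-\tilde a^{2}}\cos(s)\ge\tilde h$, hence $|\tilde h'(s)|\le\|\tilde\varrho\|/\tilde h(s)$. Together with the a priori exponential bound (\ref{eq:EstimhtByh0}), this yields an increment of order $O(h_{0}^{\,-1})$ on each of these pieces. Finally I would compose the four segments; since each step changes the running value by only $O(1)$, the initial condition for the next piece is still $h_{0}+O(1)$, so the error constants propagate unchanged. Adding the four contributions gives $h(2\pi)-h(0)=2\pi\varrho(\pi)/a(\pi)+O(h_{0}^{\,-1}\log h_{0})$.

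The main obstacle is the second step, the central left interval $[-\pi/2,0]$. Lemma~\ref{thm:Preparatoryht} is asymmetric: it identifies the dominant contribution at the \emph{left} endpoint of the interval and measures a forward evolution. Here the dominant endpoint is on the \emph{right}, so one must either invert time (introducing the bookkeeping sign flip $\tilde\varrho\mapsto-\tilde\varrho$) or invoke $\mathcal{H}^{-1}$ and verify that its asymptotic form $\mathcal{H}^{-1}(y)=y-\pi\varrho(0)/a(0)+O(y^{-1}\log y)$ holds with the same error order as the forward mapping. Everything else — splitting, the crude bound on the outer pieces, and composition of the four estimates — is essentially routine once this delicate point is settled.
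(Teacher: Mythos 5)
Your proposal is correct and follows essentially the same route as the paper: translate so that the singular point $t=\pi$ of the denominator sits at the origin, split into four subintervals of length $\pi/2$, apply Lemma~\ref{thm:Preparatoryht} forward on one central piece and backward (time reversal, equivalently the inverse map $\mathcal{H}^{-1}$ whose asymptotics the paper asserts) on the other, and bound the two outer pieces by $O(h_{0}^{\,-1})$ using that $\cos\leq 0$ there. The paper only sketches this in the paragraph preceding the lemma, and your write-up supplies precisely that bookkeeping, including the correct identification of the inversion of the half-period map as the only delicate point.
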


In the next step, applying repeatedly
Lemma~\ref{thm:Preparatoryht02Pi} one can show that solutions of the
differential equation (\ref{eq:ODEgtvarphi}) in the resonant case
$\Omega\in\mathbb{N}$ (with $b=1$) grow with time linearly provided
the initial condition is sufficiently large and the phase $\phi$
belongs to a certain interval.

\begin{proposition}
  \label{thm:ResonantWb}
  Suppose $\varrho(t)$, $a(t)$ are continuously differentiable
  $2\pi$-periodic real functions, $\phi\in\mathbb{R}$, $a(t)$ is
  everywhere positive and
  \[
  \varrho\!\left(-\phi-\frac{\pi}{2}\right)>0.
  \]
  Let $g(t)$ be a solution of the differential equation
  (\ref{eq:ODEgtvarphi}) on the interval $t\geq0$ with the initial
  condition $g(0)=g_{0}\geq1$.  If $g_{0}$ is sufficiently large then
  \[
  g(t)=\frac{\varrho\!\left(-\phi-\frac{\pi}{2}\right)}
  {a\!\left(-\phi-\frac{\pi}{2}\right)}\, t
  +O\!\left(\log(t)^{2}\right)\textrm{~~as~}t\to+\infty.
  \]
\end{proposition}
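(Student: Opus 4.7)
My plan is to reduce the claim to an iteration of Lemma~\ref{thm:Preparatoryht02Pi} over successive periods, after a time shift that converts the sine into a cosine. Put $s=t+\phi-\pi/2$ and $\tilde g(s)=g(s-\phi+\pi/2)$, with $\tilde\varrho(s)=\varrho(s-\phi+\pi/2)$ and $\tilde a(s)=a(s-\phi+\pi/2)$; these remain continuously differentiable and $2\pi$-periodic. Since $\sin(t+\phi)=\cos(s)$, equation (\ref{eq:ODEgtvarphi}) transforms into
\[
\tilde g'(s)=\frac{\tilde\varrho(s)}{\tilde g(s)+\sqrt{\tilde g(s)^{2}-\tilde a(s)^{2}}\,\cos(s)},
\]
which is of the form covered by Lemma~\ref{thm:Preparatoryht02Pi}. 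By $2\pi$-periodicity and the standing hypothesis, $\tilde\varrho(\pi)=\varrho(-\phi-\pi/2)>0$ and $\tilde a(\pi)=a(-\phi-\pi/2)>0$, and I write $c=\tilde\varrho(\pi)/\tilde a(\pi)$ for the desired slope.

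Next I would iterate the lemma on successive periods. Setting $u_n=\tilde g(2\pi n)$ and exploiting $2\pi$-periodicity to translate each interval $[2\pi n,2\pi(n+1)]$ back to $[0,2\pi]$, Lemma~\ref{thm:Preparatoryht02Pi} gives the discrete recurrence
\[
u_{n+1}=u_n+2\pi c+r_n,\qquad |r_n|\le C\,\frac{\log u_n}{u_n},
\]
valid whenever $u_n$ exceeds a fixed threshold $u_{*}$ arising from the ``$h_0$ sufficiently large'' hypothesis of the lemma. The constants $C$ and $u_{*}$ depend only on $\tilde\varrho,\tilde a$ and are in particular independent of $n$.

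The heart of the argument is a bootstrap induction showing $u_n\ge u_{*}$ for every $n$. Choosing $g_0=u_0$ large and assuming inductively that $u_k\ge u_0/2+\pi c k$ for all $k\le n$, one has
\[
\Bigl|\sum_{k=0}^{n-1}r_k\Bigr|\le C\sum_{k=0}^{n-1}\frac{\log u_k}{u_k}=O\bigl((\log n)^{2}\bigr),
\]
since $\log u_k/u_k$ behaves like $\log k/k$ and $\sum_{k=1}^{n}\log k/k\sim(\log n)^{2}/2$. Summing the recurrence then yields $u_n=u_0+2\pi c n+O((\log n)^{2})$, which for $u_0$ large enough comfortably exceeds the inductive lower bound, closing the bootstrap. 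Consequently $u_n=2\pi c n+O((\log n)^{2})$.

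To promote this discrete estimate to a continuous statement, I would apply Lemma~\ref{thm:Preparatoryht} (together with the reversed-boundary variant used to derive Lemma~\ref{thm:Preparatoryht02Pi}) on each of the four subintervals of length $\pi/2$ inside $[2\pi n,2\pi(n+1)]$; this shows $\tilde g(s)-u_n=O(1)$ uniformly on the period. Hence $\tilde g(s)=cs+O((\log s)^{2})$ as $s\to+\infty$, and undoing the shift $t=s-\phi+\pi/2$ produces the claimed asymptotic for $g(t)$. I expect the main obstacle to be the bootstrap: one must verify that the threshold $u_{*}$ and the constant $C$ can be chosen uniformly across iterations so that the drift $2\pi c$ strictly dominates the accumulated perturbation. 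Global existence of $g(t)$ on $[0,+\infty)$ is then automatic, since each $u_n$ stays above $u_{*}$ and $\tilde g$ varies by only $O(1)$ inside each period.
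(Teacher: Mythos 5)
Your proposal is correct and follows essentially the same route as the paper, which itself only sketches the argument as "applying repeatedly Lemma~\ref{thm:Preparatoryht02Pi}": your time shift aligns the equation with the lemma's normal form, and your bootstrap induction with the summation $\sum_k \log u_k/u_k = O((\log n)^2)$ supplies exactly the details the paper leaves implicit, including the uniformity of the threshold and constants across periods and the $O(1)$ control inside each period via Lemma~\ref{thm:Preparatoryht}. No gaps.
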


\def\proof{\par{\it Proof of Proposition~\ref{thm:F}}.  \ignorespaces}
\begin{proof}
  This is just an immediate corollary of
  Proposition~\ref{thm:ResonantWb}. It suffices to go back to the
  original notation and equation (\ref{eq:ODEsimplifiedF}) by applying
  the substitution $F(t)=g(bt)$, $a(t)=\tilde{a}(bt)$,
  $\varrho(t)=\tilde{a}(t)\tilde{a}'(t)$. Equation
  (\ref{eq:ODEsimplifiedF}) transforms into (\ref{eq:ODEgtvarphi})
  (with $a(t)$ being replaced by $\tilde{a}(t)$), and
  Proposition~\ref{thm:ResonantWb} is directly applicable and gives
  the result.
\end{proof}
\def\proof{\par{\it Proof}. \ignorespaces}


\def\proof{\par{\it Proof of Proposition~\ref{thm:asym_phit_F_lin}}.
  \ignorespaces}
\begin{proof}
  Put $\zeta(t)=bt-\phi(t)$. Recall that $\phi'(t)$ obeys the estimate
  (\ref{eq:phi_der_estim}). Choose $t_{\ast}\in\mathbb{R}$ such that
  $|\phi'(t)|\leq{}b/2$, $\forall{}t\geq{}t_{\ast}$.
  Hence the function $\zeta(t)$ is strictly increasing and
  $b/2\leq\zeta'(t)\leq3b/2$.  Moreover, we choose $t_{\ast}$
  sufficiently large so that
  \begin{equation}
    F(t+s)\leq\sqrt{2}\, F(t)\ \mbox{for\ }0
    \leq s\leq3\pi b,\mbox{\ and\ }F(t)\geq\sqrt{2}\, A_{2},\ \ 
    \forall t\geq t_{\ast}.
    \label{eq:estim_tstar}
  \end{equation}
  Fix $\ell\in\mathbb{N}$ such that $\zeta(2\pi(\ell+1))\geq
  t_{\ast}$.  Put $\tau_{k}=\zeta(2\pi(\ell+k))$, $k\in\mathbb{N}$.
  Note that $\pi b\leq\tau_{k+1}-\tau_{k}\leq3\pi b$. For a given
  $k\in\mathbb{N}$ put
  \[
  F_{1}=\min_{t\in[\tau_{k},\tau_{k+1}]}F(t),\mbox{\ }F_{2}
  = \max_{t\in[\tau_{k},\tau_{k+1}]}F(t).
  \]
  One has
  \begin{eqnarray*}
    \int_{\tau_{k}}^{\tau_{k+1}}|\phi'(t)|\,\mbox{d}t
    & \leq & \frac{2}{b}
    \int_{\zeta(2\pi(\ell+k))}^{\zeta(2\pi(\ell+k+1))}|\phi'(t)|\zeta'(t)\,
    \mbox{d}t\\
    \noalign{\medskip}
    & \leq & \frac{2A_{2}A_{3}}{b\sqrt{F_{1}^{\,2}-A_{2}^{\,2}}}
    \int_{2\pi(\ell+k)}^{2\pi(\ell+k+1)}\frac{|\cos s|}
    {\tilde{F}(s)+\sqrt{\tilde{F}(s)^{2}-\tilde{a}(s)^{2}}\,
      \sin s}\,\mbox{d}s
  \end{eqnarray*}
  where $\tilde{F}=F\circ\zeta^{-1}$, $\tilde{a}=a\circ\zeta^{-1}$.
  Put $M_{+}=2\pi(\ell+k)+[\,0,\pi\,]$,
  $M_{-}=\linebreak2\pi(\ell+k)+[\,\pi,2\pi\,]$. One has
  \[
  \int_{M_{+}}\frac{|\cos s|}{\tilde{F}(s)+\sqrt{\tilde{F}(s)^{2}
      -\tilde{a}(s)^{2}}\,\sin s}\,\mbox{d}s
  \leq 2\int_{0}^{\pi/2}\frac{\cos s}{F_{1}
    +\sqrt{F_{1}^{\,2}-A_{2}^{\,2}}\,\sin s}\,\mbox{d}s
  \leq \frac{2\log2}{\sqrt{F_{1}^{\,2}-A_{2}^{\,2}}}.
  \]
  For $s\in M_{-}$ one can estimate
  \[
  \frac{1}{\tilde{F}(s)+\sqrt{\tilde{F}(s)^{2}
      -\tilde{a}(s)^{2}}\,\sin s}
  \leq \frac{F_{2}+\sqrt{F_{2}^{\,2}-A_{1}^{\,2}}\,|\sin s|}
  {F_{1}^{\,2}\cos^{2}s+A_{1}^{\,2}\sin^{2}s}
  < \frac{2}{F_{1}+\sqrt{F_{1}^{\,2}-A_{1}^{\,2}}\,\sin s}
  \]
  where we have used that
  \[
  \frac{F_{2}}{\sqrt{F_{2}^{\,2}-A_{1}^{\,2}}}
  \leq \frac{F_{1}}{\sqrt{F_{1}^{\,2}-A_{1}^{\,2}}}\,,\ F_{2}^{\,2}
  -A_{1}^{\,2}\leq2F_{1}^{\,2}-A_{1}^{\,2}
  \leq 4F_{1}^{\,2}-5A_{1}^{\,2}<4(F_{1}^{\,2}-A_{1}^{\,2}),
  \]
  as it follows from (\ref{eq:estim_tstar}). Thus one arrives at the
  estimates
  \begin{eqnarray*}
    \int_{M_{-}}\frac{|\cos s|}{\tilde{F}(s)
      +\sqrt{\tilde{F}(s)^{2}-\tilde{a}(s)^{2}}\,\sin s}\,\mbox{d}s
    & \leq & 4\int_{0}^{\pi/2}\frac{\cos s}{F_{1}
      -\sqrt{F_{1}^{\,2}-A_{1}^{\,2}}\,\sin s}\,\mbox{d}s\\
    & \leq & \frac{4}{\sqrt{F_{1}^{\,2}-A_{1}^{\,2}}}
    \log\!\left(\frac{2F_{1}^{\,2}}{A_{1}^{\,2}}\right)
  \end{eqnarray*}
  and
  \[
  \int_{\tau_{k}}^{\tau_{k+1}}|\phi'(t)|\,\mbox{d}t
  \leq \frac{32A_{2}A_{3}}{bF_{1}^{\,2}}\,
  \log\!\left(\frac{2F_{1}}{A_{1}}\right).
  \]
  Referring to the asymptotic behavior (\ref{eq:Ft_asympt}) one
  concludes that there exists a constant $C_{\ast}>0$ such that
  \[
  \int_{\tau_{1}}^{\infty}|\phi'(t)|\,\mbox{d}t
  = \sum_{k=1}^{\infty}\int_{\tau_{k}}^{\tau_{k+1}}|\phi'(t)|\,\mbox{d}t
  \leq C_{\ast}\sum_{j=\ell+1}^{\infty}\frac{\log(j)}{j^{2}}<\infty.
  \]
  Hence the limit
  $\lim_{t\to+\infty}\phi(t)=\phi(\infty)\in\mathbb{R}$ does exist and
  (\ref{eq:phit_asympt}) follows.
\end{proof}
\def\proof{\par{\it Proof}. \ignorespaces}


\section{Some derivations related to the guiding center
  coordinates\label{sec:Conclusion}}

We keep the notation introduced in
Subsection~\ref{sec:results_guiding} but for the sake of simplicity we
again put $m=e=1$. The physical constants can readily be reestablished
if necessary.

In particular, $X=q+(1/b)v^\perp$, $R=-(1/b)v^\perp$, and a direct
computation yields
\[
|R|^{2}-|X|^{2}=\frac{2a}{b}\,,\ |v|^{2}
=p_{r}^{\,2}+\frac{b^{2}r^{2}}{4}+\frac{a^{2}}{r^{2}}+ba.
\]
Using (\ref{eq:canonic_r}), (\ref{eq:canonic_pr}) one derives the
equalities
\begin{equation}
  |X|^{2} = \frac{1}{b}\,(2I+|a|-a),\ 
  |R|^{2} = \frac{1}{b}\,(2I+|a|+a)
  \label{eq:X_R_norm}
\end{equation}
and
\begin{equation}
  r=\left(|X|^{2}+|R|^{2}+2|X||R|\sin(\varphi)\right)^{\!1/2},\
  p_{r}=\frac{b}{r}\,|X||R|\cos(\varphi).
\label{eq:r_pr_rel_X_R}
\end{equation}
On the other hand, one has
\begin{equation}
  r^{2}=|X|^{2}+|R|^{2}+2X\cdot R
  =\frac{2}{b}\left(2I+|a|+2\sqrt{I(I+|a|)}\,
    \cos(\vartheta-\chi)\right)\!.
  \label{eq:r2_X_R_norm}
\end{equation}
By comparison of (\ref{eq:r2_X_R_norm}) with (\ref{eq:r_pr_rel_X_R})
one shows equality (\ref{eq:varth_rel_varphi_chi}).

Further we sketch derivation of the the asymptotic relations
(\ref{eq:chi_vtheta_asympt}). To this end, let us compute the
derivative $\chi'(t)$. This can be done by differentiating the
equality
\[
r\,(\cos\theta,\sin\theta)
=|X|(\cos\chi,\sin\chi)+|R|(\cos\vartheta,\sin\vartheta)
\]
and then taking the scalar product with the vector
$(-\sin\theta,\cos\theta)$. One has
\[
\theta'=\frac{\partial H}{\partial p_{\theta}}=\frac{a}{r^{2}}
+\frac{b}{2}
\]
where $H$ is the Hamiltonian (\ref{eq:Ham_polar_coord}) expressed in
polar coordinates. Using (\ref{eq:varth_rel_varphi_chi}) and some
straightforward manipulations one finally arrives at the differential
equation
\begin{equation}
  \chi'=\frac{|R|a'\cos\varphi}{|X|br^{2}}\,.\label{eq:ODE_chi}
\end{equation}

Equation (\ref{eq:ODE_chi}) admits an asymptotic analysis with the aid
of similar methods as those used in
Section~\ref{sec:SimplifiedEquation}. In order to spare some space we
omit the details. We still assume that $\Omega/b\in\mathbb{N}$.
Recalling (\ref{eq:I_vphi_asympt}), (\ref{eq:C_xi_asympt_def}) one
observes that the main contribution to the growth of $\chi(t)$ over a
period $T=2\pi/b$ equals
\begin{eqnarray*}
  \chi((n+1)T)-\chi(nT) & \sim & \frac{1}{4CnT}\,
  \lim_{\alpha\to0}\,\int_{0}^{T}
  \frac{a'(t)\cos(bt+\phi(\infty))}{1+\sqrt{1-\alpha^{2}}\,
    \sin(bt+\phi(\infty))}\,\mbox{d}t.
\end{eqnarray*}
Proceeding this way one finally concludes that the first equality in
(\ref{eq:chi_vtheta_asympt}) holds, with $\chi(\infty)$ being a real
constant and
\[
D=\frac{1}{4\pi f'(-\xi)}\,
\int_{0}^{\pi}\left(f'\!\left(\frac{\Omega}{b}\,
    t-\xi\right)-f'\!\left(-\frac{\Omega}{b}\, t-\xi\right)\right)
\frac{\sin(t)}{1-\cos(t)}\,\mbox{d}t.
\]
Being given the Fourier series
$f'(t)=\sum_{k=1}^{\infty}\left(a_{k}\cos(kt)+b_{k}\sin(kt)\right)$
one can also express
\begin{displaymath}
  D = \frac{1}{2}\,\sum_{k=1}^{\infty}\left(a_{k}\sin(k\xi)
    +b_{k}\cos(k\xi)\right)\bigg/
  \sum_{k=1}^{\infty}\left(a_{k}\cos(k\xi)-b_{k}\sin(k\xi)\right)
\end{displaymath}
as it follows from the equality, valid for any $n\in\mathbb{N}$,
\begin{displaymath}
  \int_0^\pi \frac{\sin(nt)\sin(t)}{1-\cos(t)}\,\mathrm{d}t = \pi.
\end{displaymath}
Moreover, (\ref{eq:I_vphi_asympt}), (\ref{eq:varth_rel_varphi_chi})
and the first equality in (\ref{eq:chi_vtheta_asympt}) imply the
second equality in (\ref{eq:chi_vtheta_asympt}) where one has to put
$\vartheta(\infty)=\phi(\infty)+\chi(\infty)-\pi/2$.


\end{document}